\newtheorem{theorem}{Theorem}[section]
\newtheorem{proposition}[theorem]{Proposition}
\newsavebox{\@brx}
\newcommand{\llangle}[1][]{\savebox{\@brx}{\(\m@th{#1\langle}\)}%
  \mathopen{\copy\@brx\kern-0.5\wd\@brx\usebox{\@brx}}}
\newcommand{\rrangle}[1][]{\savebox{\@brx}{\(\m@th{#1\rangle}\)}%
  \mathclose{\copy\@brx\kern-0.5\wd\@brx\usebox{\@brx}}}
\begin{document}

\preprint{APS/123-QED}

\title{Quantum Advantage in Distributed Sensing with Noisy Quantum Networks}

\author{Allen Zang}
\email{yzang@uchicago.edu}
\affiliation{Pritzker School of Molecular Engineering, University of Chicago, Chicago, IL, USA}

\author{Alexander Kolar}
\affiliation{Pritzker School of Molecular Engineering, University of Chicago, Chicago, IL, USA}

\author{Alvin Gonzales}
\affiliation{Mathematics and Computer Science Division, Argonne National Laboratory, Lemont, IL, USA}

\author{Joaquin Chung}
\affiliation{Data Science and Learning Division, Argonne National Laboratory, Lemont, IL, USA}

\author{Stephen K. Gray}
\affiliation{Center for Nanoscale Materials, Argonne National Laboratory, Lemont, IL, USA}

\author{Rajkumar Kettimuthu}
\affiliation{Data Science and Learning Division, Argonne National Laboratory, Lemont, IL, USA}

\author{Tian Zhong}
\affiliation{Pritzker School of Molecular Engineering, University of Chicago, Chicago, IL, USA}

\author{Zain H. Saleem}
\email{zsaleem@anl.gov}
\affiliation{Mathematics and Computer Science Division, Argonne National Laboratory, Lemont, IL, USA}

\date{\today}

\begin{abstract}
    We show that quantum advantage in distributed sensing can be achieved with noisy quantum networks which only distribute noisy entangled states. We derive a closed-form expression of the quantum Fisher information (QFI) for estimating the average of local parameters using GHZ-diagonal probe states, a representative distributed sensing scenario. From the QFI we obtain the necessary condition to achieve quantum advantage over the optimal local sensing strategy, which can also serve as an optimization-free entanglement detection criterion for multipartite states. We further explore the impacts from imperfect local entanglement generation and local measurement constraint, and our results imply that the quantum advantage is more robust against quantum network imperfections than local operation errors. Notably, these implications still hold when we explicitly consider dephasing during the sensing dynamics. Our results significantly advance the understanding of the achievability of quantum advantage in noisy distributed sensing. They also offer practical guidance for real-world implementation of quantum sensor networks.
\end{abstract}

\maketitle

\section{Introduction}
Distributed quantum sensing (DQS)~\cite{proctor2017networked,proctor2018multiparameter,rubio2020quantum} is one of the most important applications of quantum networks~\cite{kimble2008quantum, wehner2018quantum}. It is expected to surpass classical sensing techniques in areas ranging from magnetometry~\cite{steinert2010high,pham2011magnetic,hall2012high,rondin2014magnetometry}, phase imaging~\cite{humphreys2013quantum}, precision clocks~\cite{komar2014quantum}, energy applications~\cite{crawford2021quantum}, all the way to the exploration of fundamental physics~\cite{barontini2022measuring,ye2024essay}, including search for dark matter~\cite{brady2022entangled} and measuring stability of fundamental constants~\cite{roberts2020search}.
Over the past decade, DQS has attracted great theoretical interests~\cite{yue2014quantum,zhang2014quantum,berry2015quantum,gessner2018sensitivity,albarelli2019evaluating,roy2019fundamental,albarelli2022probe,yang2024quantum}. Various DQS protocols have been proposed under ideal conditions~\cite{pezze2017optimal,ge2018distributed,qian2019heisenberg,qian2021optimal,bringewatt2021protocols,gorecki2022multiple,ehrenberg2023minimum,hu2025optimal}. On the other hand, along with the experimental demonstration of concept in small-scale matter systems~\cite{malia2022distributed}, the analysis of realistic DQS has also emerged recently, for instance, DQS with noise in signals~\cite{sekatski2020optimal,wolk2020noisy,hamann2022approximate,hamann2024optimal}, atomic ensemble partition for DQS~\cite{fadel2023multiparameter}, and DQS with probabilistic quantum network operations~\cite{van2024utilizing}.
However, the state-of-the-art progress in quantum networking~\cite{knaut2024entanglement,liu2024creation,stolk2024metropolitan} demonstrates that remote entanglement has much higher infidelity than local operations, so the process of initial state preparation will be the major error source for DQS in the near term. Despite its critical importance and necessity, the systematic analysis of DQS with imperfect state preparation surprisingly remains largely unexplored. In this work, we focus on studying the possibility of realizing quantum advantage in DQS with noisy quantum networks which can only distribute noisy entangled states. 

\begin{figure}[t]
    \centering
    \includegraphics[width=0.5\linewidth]{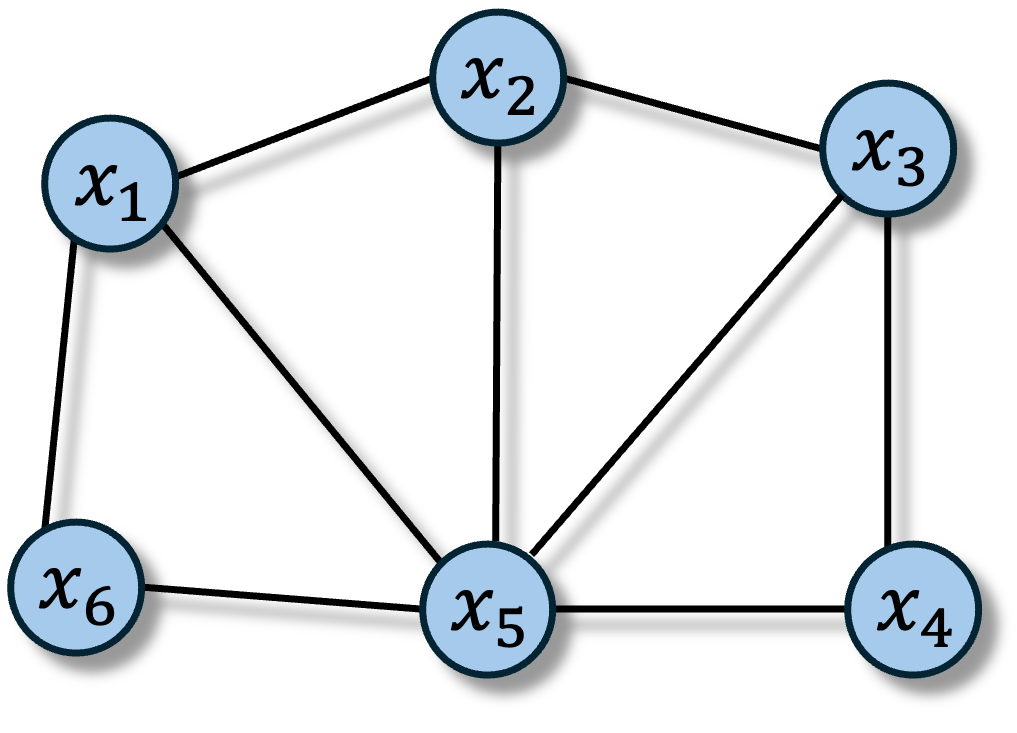}
    \caption{A general quantum sensor network where a parameter $x_i$ is encoded locally for sensor node $i$. The solid edges between sensor nodes represent physical connections such as optical fibers.}
    \label{fig:sensor_network}
\end{figure}

We first specify the initial probe state, the parameter encoding process, and the estimation objective as follows. In a general DQS task, the objective is to estimate global function(s) of local parameters, where a generally different parameter $x_i$ is encoded on sensor node $i$, as shown in Fig.~\ref{fig:sensor_network}. In this work, we consider the canonical simple example of estimating the average of local parameters~\cite{gross2020one} that are distributed at distant network nodes. This task can be decomposed into three steps: (i) preparation of the initial probe state, (ii) encoding the parameters on the probe state, and (iii) performing measurement on the encoded state to extract information about the parameters. We assume that when there is no decoherence the parameter encoding has the standard phase accumulation form: $U(x) = \exp\left[-i\left(\sum_{i=0}^{d-1}x_iH_i\right)\right]$, where $x=(x_1,\dots,x_d)^T\in\mathbb{R}^d$ is an array of $d$ parameters located at $d$ sensor nodes in the network, and $H_i=\frac{1}{2}\sum_{k=0}^{n-1}\sigma^{(i,k)}_z$ is the z-component of collective spin for the $n$ local qubit sensors on node $i$. 

Under the aforementioned unitary encoding channel, the optimal probe state for this problem has been shown to be the global GHZ state ~\cite{proctor2018multiparameter,proctor2017networked}, i.e., a GHZ state $|\mathrm{GHZ}\rangle=(|0\dots 0\rangle+|1\dots 1\rangle)/\sqrt{2}$ involving all the sensors on each sensor node. Therefore, we choose this global GHZ state as the target initial probe state to prepare in the quantum network of sensors. The preparation of the probe state over a quantum network can be decomposed into two steps. Firstly, the quantum network will distribute a $d$-qubit GHZ state across all $d$ sensor nodes, with each node having one qubit. Each node can then perform local entanglement generation~\cite{monz201114,song2019generation,omran2019generation,ho2019ultrafast,pogorelov2021compact,mooney2021generation,zhao2021creation,comparin2022multipartite,zhang2024fast,cao2024multi,yin2024fast} to extend the size of the global GHZ state: By entangling $(n-1)$ additional quantum sensors per node, the global GHZ state will have $nd$ qubits in total. Note that only $(d-1)$ Bell pairs are needed to distribute the $d$-qubit GHZ state over the $d$ nodes, see App.~\ref{app:architecture}. Therefore, such global GHZ states are arguably the easiest to generate by future quantum networks which primarily distribute Bell pairs~\cite{kimble2008quantum, wehner2018quantum}, which further justifies why we choose them as the probe states.

Given noisy quantum networks, the prepared initial probe state has to be a mixed state. In this work, we assume that the initial probe state is GHZ-diagonal: $\rho_0=\sum_a\lambda_a|\psi_{a0}\rangle\langle\psi_{a0}|$ with all $\lambda_a$ being non-zero, where $a$ is the index of the pure GHZ basis state $|\psi_{a0}\rangle$. The GHZ basis states we consider are superpositions of computational basis states, e.g. $(|0\dots00\rangle\pm|1\dots,11\rangle)/\sqrt{2},~|0\dots01\rangle\pm|1\dots,10\rangle)/\sqrt{2}$, etc. Such states can be understood as the the result of a pure GHZ state undergoing Pauli channels. The model of Pauli errors~\cite{shettell2020graph} can be justified as follows: In principle one can perform Pauli twirling~\cite{dur2005standard,emerson2007symmetrized,dankert2009exact} to transform error channels into Pauli channels, or stabilizer twirling~\cite{toth2007efficient} to cancel off-diagonal density matrix elements under a stabilizer basis. 
Many real-world noise processes during state preparation already tend to destroy the coherence between different GHZ basis states, and twirling just makes this aspect more explicit and systematic.
This assumption of GHZ-diagonal states also facilitates our analytical studies, for which we analytically derive the quantum Fisher information (QFI)~\cite{helstrom1969quantum,czekaj2015quantum} which characterizes the lower bound of parameter estimation variance, the quantum Cram\'er-Rao bound (QCRB). Further details about the justification of why we consider GHZ-diagonal initial probe states can be found in App.~\ref{app:twirling}.

Next we derive the condition for quantum advantage over the optimal local sensing protocol from the QFI and demonstrate how the condition serves as an entanglement detection criterion for multipartite states in Sec.~\ref{sec:q_adv_cond}.
We demonstrate quantum advantage in distributed sensing when ignoring decoherence during the encoding dynamics in Sec.~\ref{sec:noiseless_encoding}, where we consider scenarios of increasing practicality, namely with noiseless local entanglement generation, imperfect local entanglement generation when scaling up the number of local sensors, and local measurement constraint, respectively. We then explicitly include the influence of individual sensors' dephasing during parameter encoding dynamics in Sec.~\ref{sec:dephasing_dynamics}. Finally, we summarize the results and provide further discussion in Sec.~\ref{sec:conclusion}.

\section{Condition for quantum advantage}\label{sec:q_adv_cond}
The condition for quantum advantage is formulated in terms of the QFI for the estimation of the average of all local parameters. With the QFI, we can further obtain an optimization-free criterion for detecting entanglement in multipartite states, that has a clear operational interpretation.

\subsection{Derivation of the QFI}
To derive the QFI for the average of all local parameters, we start with the QFI matrix (QFIM)~\cite{paris2009quantum,petz2011introduction,toth2014quantum,liu2020quantum} for the $d$ local parameters. According to the assumed parametrization unitary, for a general GHZ-diagonal $nd$-qubit initial probe state $\rho_0$ where each of the $d$ sensor nodes holds $n$ qubit sensors, the QFIM for parameters $x$ is $\mathcal{F}(x)=(1-C)n^2J_d$, where $J_d$ is the $d$-by-$d$ matrix of ones, and $C$ captures the quality of the initial probe state, and it is analytically calculated as
\begin{align}\label{Eq:C_coeff}
    C = \sum_{(a,b)\in\mathcal{S}}\frac{4\lambda_{a0}\lambda_{b0}}{\lambda_{a0}+\lambda_{b0}},
\end{align}
where $\lambda_{a0}$ is the eigenvalue corresponding to GHZ basis state $|\psi_{a0}\rangle$, and $\mathcal{S}$ denotes the set of index pairs $(a,b)$ \textit{without double counting}, such that $|\psi_{a0}\rangle$ and $|\psi_{b0}\rangle$ are GHZ states expressed as superpositions of the same pair of computational basis states but with opposite relative phase. For instance, for 3-qubit GHZ states $(|000\rangle\pm|111\rangle)/\sqrt{2}$ is such a pair. Note that $C$ is not necessarily a constant and in general depends on $n$, and its dependence is determined by error models. 

Our parameter to estimate is $\theta_1 = v_1^Tx$ where $v_1\propto(1,\dots,1)^T$. For concreteness, we follow the convention in~\cite{proctor2018multiparameter,proctor2017networked} and choose $v_1=(1,\dots,1)^T/\sqrt{d}$ which is normalized under the 2-norm. Then we are able to transform the QFIM for the new global parameter $\theta_1$, where the transformation could be operationally realized by constructing additional $(d-1)$ normalized vectors $v_2,\dots,v_d$, s.t. $v_i^Tv_j=\delta_{ij}$~\cite{paris2009quantum}. Consequently, we have
\begin{align}
    \mathcal{F}(\theta) = d(1-C)n^2
    \begin{pmatrix}
        1 & 0 & \dots & 0 \\
        0 & 0 & \dots & 0 \\
        \vdots & \vdots & \ddots & \vdots \\
        0 & 0 & \dots & 0 \\
    \end{pmatrix},
\end{align}
where $\theta=(v_1^Tx,\dots, v_d^Tx)$, and the only non-zero entry is the QFI for $\theta_1$ of our interest: $\mathcal{F}(\theta_1) = d(1-C)n^2$, which incorporates effects from both the network ($d$) and local resources ($n^2$). The details can be found in App.~\ref{app:qfi_ghz}.
Given the analytical formula of calculating the $C$ term for the QFI, we have a lower bound of QFI for all possible GHZ-diagonal states with a fixed fidelity, whose proof is also in App.~\ref{app:qfi_ghz}.
\begin{proposition}\label{prop:worst_case_qfi}
    For $d$-qubit GHZ-diagonal states with a fixed fidelity $F\in(0,1)$, the lowest QFI for estimating the average of $d$ local parameters is $d(2F-1)^2$.
\end{proposition}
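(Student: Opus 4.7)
The strategy is to leverage the identity $\mathcal{F}(\theta_1) = d(1-C)$ derived in the preceding section, specialized to $n=1$, so that minimizing the QFI at fixed fidelity is equivalent to maximizing the noise functional
\[
C = \sum_{(a,b)\in\mathcal{S}}\frac{2\lambda_{a0}\lambda_{b0}}{\lambda_{a0}+\lambda_{b0}}
\]
over GHZ-diagonal probability distributions subject to $\lambda_{a_\star 0}=F$, where $a_\star$ indexes the target GHZ state. The set $\mathcal{S}$ decomposes into $2^{d-1}$ unordered basis-state pairs, each counted twice. I would single out the ``target pair'' that contains $a_\star$ with eigenvalues $(F,q)$, and parametrize the remaining $2^{d-1}-1$ pairs by $(p_k,q_k)$ with sums $s_k = p_k+q_k$.

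The first simplification is to dispose of the non-target pairs. AM-GM gives $4p_k q_k\leq (p_k+q_k)^2$, hence $\tfrac{4p_k q_k}{p_k+q_k}\leq s_k$ with equality iff $p_k=q_k=s_k/2$; so at any maximizer every non-target pair is balanced and contributes exactly $s_k$. Substituting the normalization $F+q+\sum_{k\geq 2}s_k=1$ collapses the problem to the one-dimensional optimization
\[
C(q) = \frac{4Fq}{F+q} + (1-F-q),\qquad q\in[0,1-F].
\]

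A short calculus argument then finishes the proof: $C'(q)=\frac{4F^2}{(F+q)^2}-1$ vanishes only at $q=F$, which lies outside the feasible interval in the regime $F\geq 1/2$. Since $C'(q)>0$ throughout $[0,1-F]$ in that regime, the maximum is attained at the boundary $q=1-F$, yielding $C_{\max}=4F(1-F)$ and hence $\mathcal{F}_{\min}(\theta_1) = d\bigl(1-4F(1-F)\bigr) = d(2F-1)^2$. Achievability is witnessed by the rank-two state $\rho^\star = F\lvert\psi_{a_\star 0}\rangle\langle\psi_{a_\star 0}\rvert + (1-F)\lvert\psi_{b_\star 0}\rangle\langle\psi_{b_\star 0}\rvert$, with $\lvert\psi_{b_\star 0}\rangle$ the pair partner of the target; since the earlier QFI derivation assumed all $\lambda_{a0}>0$, I would formally approach the optimum via full-rank perturbations that place vanishing weight $\epsilon>0$ on the remaining $2^d-2$ GHZ basis states and invoke continuity of $C$ in the eigenvalues.

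The main subtlety, rather than any computational obstacle, is the boundary structure: the interior critical point $q=F$ becomes feasible for $F<1/2$ and there produces $C=1$, i.e.\ a state whose encoded part is diagonal in the computational basis and carries no Fisher information, giving an actual minimum QFI of zero. The formula $d(2F-1)^2$ is therefore tight precisely in the physically relevant regime $F\geq 1/2$, and I would either explicitly restrict to this regime in the proof or present the result as a lower bound that is saturated there.
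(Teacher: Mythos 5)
Your argument is correct, and it takes a genuinely different and cleaner route than the paper: you collapse the optimization to one dimension by using AM--GM to show each non-target pair contributes at most its total weight $s_k$ (with equality when balanced), whereas the paper keeps all $2^d-1$ variables, argues that the unconstrained gradient never vanishes, and descends through the faces of the simplex to compare the vertices $p_i=1$. Your reduction buys something important that the paper's argument misses. The caveat in your last paragraph is not a technicality --- it is a genuine counterexample to the proposition as stated. For $F<1/2$ the state that puts weight $F$ on the target GHZ state, weight $F$ on its phase-flipped partner, and distributes $1-2F$ in balanced pairs over the remaining GHZ basis states has fidelity $F$, achieves $C=1$, and hence has QFI $0< d(2F-1)^2$ (indeed such a state is diagonal in the computational basis and commutes with the encoding Hamiltonian). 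The paper's proof overlooks this because its ``no critical point inside $\mathcal{R}$'' step only checks that the partial derivatives are positive, which rules out unconstrained critical points but not constrained ones: on the simplex $\sum_i p_i=1$ the relevant condition is that all partial derivatives be equal, and at $p_1=F/(1-F)$ (feasible precisely when $F\le 1/2$) with balanced non-target pairs every partial derivative equals $1-F$, which is exactly the interior maximizer your one-dimensional analysis locates at $q=F$. So the proposition should be restricted to $F\ge 1/2$ (or restated as ``the minimum QFI is $d\,\max(2F-1,0)^2$''), and your handling of the full-rank requirement by continuity is the right way to reconcile achievability with the paper's standing assumption that all $\lambda_a$ are nonzero.
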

The worst-case scenario corresponds to that all errors in the initial probe state are indistinguishable from the signal. On the other hand, the QFI for noisy GHZ state can be equal to that for noiseless GHZ state, which can be interpreted as that when probe state errors can be distinguished from the signal, it is still possible to extract the information encoded in the ``noisy'' components of the probe state.

\subsection{Comparison with the optimal local strategy}
To demonstrate quantum advantage, the comparison baseline should be the optimal local sensing strategy where each sensor node can estimate the local parameter to the best accuracy possible under the resource constraints, and we then use the estimated values to approximate their average. 

It is well known that the GHZ state is the optimal probe state for local phase estimation~\cite{bollinger1996optimal}, and indeed using separate GHZ states on each sensor node is the best local strategy (without any quantum communication between sensor nodes) for estimating the average of local parameters~\cite{proctor2018multiparameter,proctor2017networked}. The variance of this indirect estimation can be calculated through propagation of error~\cite{toth2014quantum}: $\mathrm{Var}_\mathrm{local}(\hat{\theta}_1) = \sum_{l=1}^d\left(\frac{\partial\theta_1}{\partial x_l}\right)^2\mathrm{Var}_\mathrm{local}(\hat{x}_l)$, where we use the QCRB of local GHZ state with $n$ qubits so that $\mathrm{Var}_\mathrm{local}(\hat{x}_l)=1/(n^2\mu)$, where $\mu$ is the number of samples. Noticing that $\partial\theta_1/\partial x_l=1/\sqrt{d}$, we immediately have $\mathrm{Var}_\mathrm{local}(\hat{\theta}_1) = 1/(n^2\mu)$. Therefore, on top of the scaling advantage from increasing local resources, the additional \textit{relative} quantum advantage of entangling sensor nodes with quantum networks is achievable when
\begin{align}\label{Eq:adv_cond}
    \eta \equiv d(1-C)>1.
\end{align}
Since entanglement in the initial probe state across sensor nodes is necessary for DQS quantum advantage, the above condition can also be interpreted as an operationally meaningful, optimization-free entanglement detection criterion for arbitrary $d$-qubit states. See details in App.~\ref{app:qfi_ghz}.
Next, we will evaluate $\eta$ for $nd$-qubit GHZ state created from the initial noisy $d$-qubit GHZ state with noiseless and noisy local entanglement generation, respectively.

\subsection{Entanglement detection criterion}
For entanglement detection purpose, we can show that any fully separable initial probe state cannot result in better DQS performance than the optimal local sensing strategy as follows.
\begin{proposition}\label{prop:ent_detect}
    Suppose a $d$-qubit state is subject to a unitary parameter encoding of $d$ parameters $x=(x_1,\dots,x_d)^T$, $U(x) = \exp\left[-i\left(\sum_{i=1}^d x_i\frac{\sigma^{(i)}_z}{2}\right)\right]$. Then the estimation variance of $\theta_1=v_1^Tx$ for $v_1=(1,\dots,1)^T/\sqrt{d}$ using fully separable $d$-qubit state is always no smaller than $1/\mu$, where $\mu$ is the amount of state copies.
\end{proposition}
The proof can be found in App.~\ref{app:qfi_ghz}. The above Proposition means that entanglement in the initial probe state across sensors is necessary to achieve quantum advantage in the DQS task of estimating $\theta_1$ over the optimal local sensing strategy. In other words, if we find that a certain input state can demonstrate DQS quantum advantage, there must be some entanglement in it with respect to the partition of $d$ sensors. Although the calculation of $C$ in the quantum advantage condition $d(1-C)>1$ is based on the GHZ-diagonal assumption, arbitrary state can be converted into GHZ-diagonal from through GHZ stabilizer twirling which is local and thus will not generate any additional entanglement. From this perspective, the quantum advantage condition $d(1-C)>1$ is a sufficient condition for entanglement detection~\cite{horodecki2009quantum,guhne2009entanglement} in $d$-partite state where each party has local dimension $2^n$, with a clear operational interpretation. Importantly, the only calculation needed is the value of $C$, and the calculation is \textit{optimization-free}. Moreover, even though there are admittedly exponentially many diagonal elements in the multipartite density matrix, in many cases we do not have to go through all of them, as we can stop the summation when we already have $d\sum_{(a,b)\in\mathcal{S}'}(\lambda_{a0}-\lambda_{b0})^2/(\lambda_{a0}+\lambda_{b0}) > 1$ for $\mathcal{S}'\subset\mathcal{S}$.

Then we demonstrate one example of $d$-qubit state which is not fully separable and can be detected by the DQS quantum advantage condition. The example state is the equal mixture of GHZ states which can be expressed as superposition of two computational basis states with zero relative phase. For instance, for 3 qubits such GHZ states include $(|000\rangle+|111\rangle)/\sqrt{2}, (|001\rangle+|110\rangle)/\sqrt{2}, (|010\rangle+|101\rangle)/\sqrt{2}, (|011\rangle+|100\rangle)/\sqrt{2}$. For $d$ qubits there are $2^{d-1}$ such GHZ states $|\mathrm{GHZ}_d^{(i)}\rangle$. Then the example state can be written as $\rho_d = 2^{1-d}\sum_{i=0}^{2^{d-1}-1}|\mathrm{GHZ}_d^{(i)}\rangle\langle\mathrm{GHZ}_d^{(i)}|$. We have $1 - C(\rho_d) = 1$, which means that $d[1 - C(\rho_d)]=d>1$, and thus $\rho_d$ must be entangled. 

The DQS quantum advantage condition easily determines the entanglement in $\rho$, but for other entanglement criteria it could be much less straightforward, if possible, to successfully detect the entanglement. For instance, according to Laskowski-{\.Z}ukowski (LZ) criterion~\cite{laskowski2005detection} for $k$-separability which is equivalent to Mermin-type inequalities~\cite{mermin1990extreme,nagata2002configuration,seevinck2008partial}, a $d$-qubit state that is $d$-separable (fully separable) must satisfy $\max_j\left\vert\rho_{j,\overline{j}}\right\vert \leq 2^{-d}$, where $\rho_{j,\overline{j}}$ is the off-diagonal element for the density matrix with $j=1,\dots,2^d$ being the row index and $\overline{j}=2^d-j+1$. This criterion is basis independent, so if a state under a certain basis violates the necessary condition for full separability, it is entangled. However, to obtain the maximum of density matrix off-diagonal element over all possible bases requires additional optimization. It is obvious that the example state $\rho_d$ does not violate the $d$-separability necessary condition in two common bases, namely the computational basis and the GHZ basis. In addition, after D{\"u}r-Cirac depolarization~\cite{dur2000classification} the example state $\rho_d$ becomes separable under arbitrary bipartition, so the D{\"u}r-Cirac separability criterion cannot be used to detect entanglement in $\rho_d$. Moreover, it can be easily verified for $\rho_3$ that it is positive after partial transpose (PPT) under any bipartition, which means that the DQS quantum advantage condition can detect entanglement when the PPT criterion~\cite{peres1996separability,horodecki2001separability} fails.

\section{Quantum advantage under noiseless encoding dynamics}\label{sec:noiseless_encoding}
The time scale for quantum network operations is in general significantly longer than local quantum operations, so the local parameter encoding process can be much less noisy than the initial probe state preparation over the quantum network. By assuming a noiseless encoding process, it is also easier to study the impact of network imperfections and distinguish it from the effect of the encoding process on the DQS performance. Therefore, we first focus on the noiseless unitary encoding.

\subsection{Noiseless local entanglement generation}
We first consider noiseless local entanglement generation, which can be interpreted as applying perfect CNOT gates between the qubit sensor initially entangled with other sensor nodes and other local qubit sensors prepared in $|0\rangle$ state to entangle, with the former being the control. This assumption results in that the $nd$-qubit GHZ state has the same fidelity as the $d$-qubit GHZ state. We are then allowed to decouple the network imperfections and local errors, and thus to understand the limit on the amount of network imperfections beyond which there cannot be any quantum advantage for DQS. 

While the analysis of $\eta$ is applicable to arbitrary GHZ-diagonal states, here for concreteness we assume a canonical full-rank error model for the initial $d$-qubit GHZ state, a depolarized GHZ state as a mixture of pure GHZ state and maximally mixed state, $\rho_\mathrm{dp}(F) = \frac{2^dF-1}{2^d-1}|\mathrm{GHZ}_d\rangle\langle\mathrm{GHZ}_d| + \frac{1-F}{2^d-1}I$, where $F$ is the fidelity to the pure GHZ state. Then we have the closed form expression of the constant $C$ in Eqn.~\ref{Eq:C_coeff}
\begin{align}
    C_\mathrm{dp} = \frac{(1-F)(4^dF+2^d-2)}{[(2^d-2)F+1](2^d-1)}.
\end{align}
One can verify that for fixed $d$ the maximum value of $C_\mathrm{dp}$ is taken at $F=2^{-d}$, corresponding to a maximally mixed state. The maximum value is $\left.C_\mathrm{dp}\right\vert_{F=2^{-d}} = 1$, because the maximally mixed state will not be able to carry any information of the parameter to estimate.

We can then substitute the $C$ for $d$-qubit depolarized GHZ state in Eqn.~\ref{Eq:adv_cond} and let $\eta_\mathrm{dp}=1$ to derive a closed-form fidelity threshold, as a requirement on the quantum network performance for quantum advantage in DQS
\begin{align}\label{Eq:fid_th_noiseless_opt}
    F_\mathrm{th,dp} = 2^{-d} + \frac{(2^d-1)\left(2^d-2+\sqrt{(2^d-2)^2+2^{d+3}d}\right)}{2^{2d+1}d}.
\end{align}
We see that in the asymptotic regime of large $d$, the above fidelity threshold reduces to $1/d$. 

As we have already argued the quantum advantage in DQS has a close relationship with quantum entanglement. The entanglement properties of the depolarized GHZ state have been well studied. $d$-qubit depolarized GHZ states are not completely separable if $F>3/(2^d+2)$~\cite{dur2000classification}, and indeed we have $F_\mathrm{th,dp}>3/(2^d+2)$ for $d\geq 2$ as expected from Prop.~\ref{prop:ent_detect}. Moreover, depolarized GHZ states are genuine multipartite entangled (GME) if $F>1/2$~\cite{guhne2010separability}. However, except for the special case of $\left.F_\mathrm{th,dp}\right\vert_{d=3}\approx 0.50963>1/2$, we have $F_\mathrm{th,dp}<1/2$ for $d>3$, which suggests that GME is generally unnecessary to demonstrate quantum advantage in DQS. 

We can also consider the worst-case scenario where the quantum network distributes a $d$-qubit rank-2 dephased GHZ state $F|\mathrm{GHZ}_d\rangle\langle\mathrm{GHZ}_d| + (1-F)\sigma_z|\mathrm{GHZ}_d\rangle\langle\mathrm{GHZ}_d|\sigma_z$. The fidelity threshold for this type of noisy GHZ state to be advantageous over the optimal sensing strategy is
\begin{align}
    F_{\mathrm{th},r=2} = \frac{1+\sqrt{d}}{2\sqrt{d}} > \frac{1}{2}.
\end{align}
From GME criterion in~\cite{guhne2010separability}, any rank-2 GHZ-diagonal state is GME, but it is not metrologically useful for our task in general. This suggests that GME is not sufficient for demonstrating quantum advantage in DQS either. Consequently, we have rigorously demonstrated that GME is neither sufficient nor necessary for quantum advantage in DQS with concrete examples.

\subsection{Imperfect local entanglement generation}

\begin{figure}[t]
    \centering
    \includegraphics[width=\linewidth]{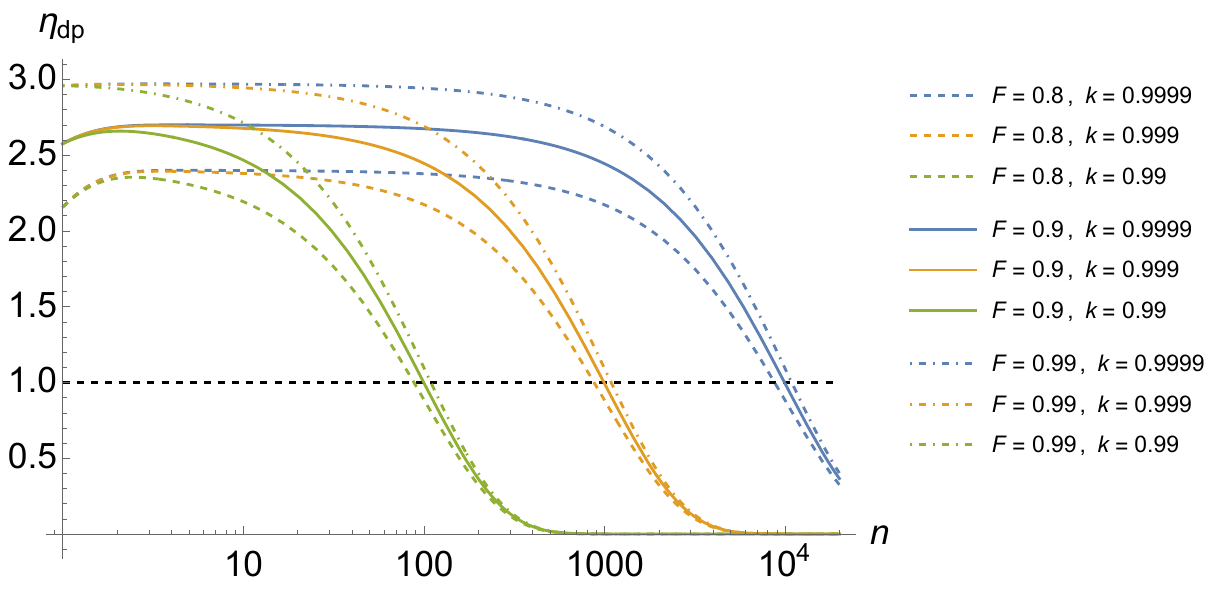}
    \caption{Visualization of $\eta_\mathrm{dp}$ as function of the number of local qubits $n$, for different initial fidelity $F$ and local entanglement generation quality $k$, with fixed $d=3$.}
    \label{fig:n_threshold}
\end{figure}

We now start to include imperfections in local entanglement generation. Specifically, we consider the following phenomenological model to reflect the fidelity decrease when adding more qubits to the GHZ state: The state after local entanglement generation is an $nd$-qubit depolarized GHZ state, while the fidelity is modified according to the number of local qubits as $F(n) = k^{n-1}F$, where $F$ is the fidelity of the initial $d$-qubit GHZ state, and $k\in(0,1)$ is a parameter which describes the quality of local entanglement generation and thus the larger the better. The intuition behind this phenomenological model is the usage of noisy CNOT gates to generate GHZ states, and one can think $k$ increases as gate fidelity increases. The justification of this phenomenological model when the error bias is not too large can be found in App.~\ref{app:noisy_eg}. We can again evaluate $C$ in Eqn.~\ref{Eq:C_coeff} as a function of the four system parameters
\begin{align}
    C_\mathrm{dp}(F,d,n,k) = \frac{(1-k^{n-1}F)(4^{nd}k^{n-1}F+2^{nd}-2)}{[(2^{nd}-2)k^{n-1}F+1](2^{nd}-1)}.
\end{align}

It can be shown that, as intuitively expected, $C_\mathrm{dp}$ decreases monotonically as initial fidelity $F$, number of local parameters $d$, and local entanglement generation quality $k$, increase for all $n\geq 1$, if $F,k>2^{-d}$. On the other hand, the dependence of $C_\mathrm{dp}$ on number of local quantum sensors $n$ is more complicated. We visualize $\eta_\mathrm{dp}=d(1-C_\mathrm{dp})$ with varying $n$ under different parameter choices in Fig.~\ref{fig:n_threshold}. From the plot we can verify that the relative advantage $\eta_\mathrm{dp}$ is larger when $F,k$ increase, and $\eta_\mathrm{dp}$ vanishes for large $n$ as the extra errors introduced by having more local sensors through noisy local entanglement generation overwhelm the gain of information. We can also see non-monotonic behavior of $\eta_\mathrm{dp}$ when $k$ is sufficiently high in comparison to $F$, which indicates an increase in relative advantage for small local sensor numbers. 

The point at which $\eta_\mathrm{dp}=1$ determines the maximum number of local quantum sensors $n_\mathrm{max}$ for quantum advantage over the optimal local sensing strategy to be potentially demonstrated. We observe that $n_\mathrm{max}$ does not change much when varying $F$ given fixed $k$, but it changes significantly when fixing $F$ and varying $k$, which suggests that the initial fidelity $F$ has less impact on $n_\mathrm{max}$ than the local entanglement generation quality $k$. This can be understood by considering the fidelity threshold for the $nd$-qubit depolarized GHZ state to be advantageous over the optimal local strategy, which also scales as $1/d$ and is almost independent of $n$. Imperfect local entanglement generation will decrease the GHZ fidelity to the threshold at $n_\mathrm{max}$. Therefore, we expect $k^{n_\mathrm{max}-1}F\approx 1/d$, which gives $n_\mathrm{max}\approx-\ln(dF)/\ln(k)$. We can explicitly evaluate the sensitivities of $n_\mathrm{max}$ to changes in $F$ and $k$ as the partial derivatives with respect to $F$ and $k$: $S_F = \partial n_\mathrm{max}/\partial F \approx - 1/[F\ln(k)]$ and $S_k = \partial n_\mathrm{max}/\partial k \approx \ln(dF)/[k\ln^2(k)]$.
We can then compare the sensitivities $S_F$ and $S_k$ by taking their ratio, from which one can show that in general $S_k/S_F\gg 1$ for high $k$ which is needed for meaningful local entanglement generation. These results imply robustness of the quantum advantage in DQS against the imperfection of probe state generation over realistic quantum networks. See more details in App.~\ref{app:noisy_eg}.

\subsection{Local measurement constraint}
It is possible that the optimal measurement to saturate the QCRB is entangling between the sensor nodes. In the DQS setup, entangling measurement needs additional remote entanglement as a resource to implement. Therefore, we want to use only local operations and classical communication (LOCC)~\cite{chitambar2014everything} to extract information in DQS problems. However, in general LOCC is not guaranteed to saturate the QCRB in DQS~\cite{zhou2020saturating}. Here we explicitly consider the local measurement constraint. 

It is known that $M=\sigma_x^{\otimes nd}$ is the optimal observable for $nd$-qubit pure GHZ probe state under unitary z-direction phase accumulation~\cite{bollinger1996optimal}, and in principle each sensor node only needs to perform measurement of the local observable $\sigma_x^{\otimes n}$. However, if we use this measurement for noisy GHZ states, for instance depolarized GHZ state, the variance of parameter estimation diverges in the limit of small local parameters. This conclusion is also implied by other numerical results~\cite{cao2023distributed}. In App.~\ref{app:loc_meas} we analytically demonstrate this property for the depolarized GHZ state, while it holds generally for any GHZ-diagonal state with non-unit fidelity.

Motivated by our problem formulation that the parameter to estimate is encoded through z-direction phase accumulation, we further explore the optimization over a subset of local measurements, i.e. the tensor product of single-qubit measurements along a direction on the equator of the Bloch sphere: $M(\alpha)=\left[O(\alpha)\right]^{\otimes nd}$ where $O(\alpha) = |\psi^+(\alpha)\rangle\langle\psi^+(\alpha)| - |\psi^-(\alpha)\rangle\langle\psi^-(\alpha)|$ with $|\psi^\pm(\alpha)\rangle = (|0\rangle\pm e^{i\alpha}|1\rangle)$ (thus $O(\alpha) = e^{i\alpha}|1\rangle\langle 0| + e^{-i\alpha}|0\rangle\langle 1|$), characterized by the azimuthal angle $\alpha$. For an $nd$-qubit depolarized GHZ state, the optimal azimuthal angle is $\alpha_\mathrm{opt} = \frac{2l+1}{2nd}\pi,\ l\in\mathbb{Z}$. Note that the optimal azimuthal angle depends on number of local quantum sensors $n$ and sensor node number $d$. Moreover, the estimation variance diverges quickly when $\alpha$ deviates from $\alpha_\mathrm{opt}$. This implies that the accuracy of local operation is extremely important, and potentially more important than the quality of the entangled states distributed by quantum networks. More details on the azimuthal measurement optimization can be found in App.~\ref{app:loc_meas}.

\begin{figure}[t]
    \centering
    \includegraphics[width=0.8\linewidth]{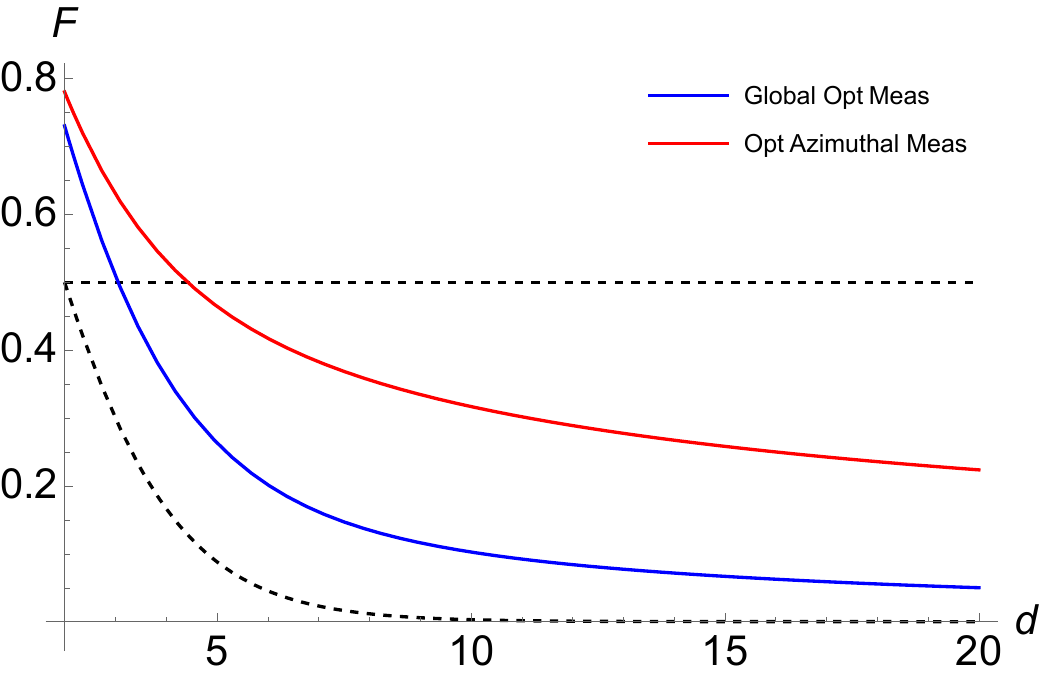}
    \caption{Visualization of fidelity thresholds. The blue curve denotes the threshold given by the QFI, and the red curve denotes the threshold given by optimized local azimuthal measurement. The horizontal and curved black dashed lines represent the thresholds for $d$-qubit depolarized GHZ state to be GME and non-fully separable, respectively.}
    \label{fig:fid_th_comparison}
\end{figure}

The fidelity threshold for an $nd$-qubit depolarized GHZ state to be advantageous over the optimal local strategy when using the optimized azimuthal measurement is
\begin{align}\label{Eq:fid_th_azimuthal}
    F_{\mathrm{th},M(\alpha_\mathrm{opt})}(n) = \frac{2^{nd} + \sqrt{d} - 1}{2^{nd}\sqrt{d}}.
\end{align}
The case of $n=1$ for Eqn.~\ref{Eq:fid_th_azimuthal} corresponds to noiseless local entanglement generation, similar to Eqn.~\ref{Eq:fid_th_noiseless_opt}. We thus visualize and compare both fidelity thresholds in Fig.~\ref{fig:fid_th_comparison}. The fidelity threshold in Eqn.~\ref{Eq:fid_th_azimuthal} is always higher than that given by Eqn.~\ref{Eq:fid_th_noiseless_opt}, suggesting that the optimized local azimuthal measurement does not saturate the QCRB. Note that from symmetry and the assumed signal orientation, the optimized azimuthal measurement should have been the best local measurement when $n=1$. Moreover, although for small problem sizes $d<5$ the initial GHZ state needs to be GME to demonstrate quantum advantage if we use the optimized azimuthal measurement, when the problem size grows the requirement of initial fidelity drops, and in general the initial $d$-qubit GHZ state does not have to be GME. 

On the other hand, we can also consider the local azimuthal measurement for $n=1$, rank-2 dephased GHZ state $\rho=F|\mathrm{GHZ}_d\rangle\langle\mathrm{GHZ}_d| + (1-F)Z|\mathrm{GHZ}_d\rangle\langle\mathrm{GHZ}_d|Z$. Through standard propagation of error analysis, we can show that the minimal variance is $\left.\mathrm{Var}_{M(\alpha_\mathrm{opt}),r=2}(\hat{\theta}_1)\right\vert_{\theta_1=0} = 1/[d(2F-1)^2]$, which is identical to the worst case value in Prop.~\ref{prop:worst_case_qfi}. This means that the optimal azimuthal measurement is able to saturate the QCRB for $n=1$ rank-2 dephased GHZ state, thus an optimal measurement, and also implies an error model-dependent locality of the optimal measurement for DQS.

\section{Effect of dephasing during parameter encoding}\label{sec:dephasing_dynamics}
Previously we have focused on noiseless encoding dynamics. In practice, the decoherence effect during the parameter encoding dynamics can also be very important~\cite{huelga1997improvement,chaves2013noisy,zheng2022preparation,saleem2023optimal}. Therefore, here we explicitly include decoherence during the parameter encoding process. We specifically focus on the effect of individual qubit dephasing which corresponds to Pauli Z error. The dephasing error is an important example of Markovian noise. Moreover, in our problem setup where the Hamiltonian generates the signal in z-direction, the improvement from using advanced error suppression techniques such as quantum error correction is greatly limited~\cite{dur2014improved,kessler2014quantum,arrad2014increasing,sekatski2017quantum,demkowicz2017adaptive,zhou2018achieving}.

\subsection{Effective initial state}
Now the independent parameter encoding dynamics for the $k$-th qubit on the $i$-th node is described by the single-qubit Lindblad master equation
\begin{align}\label{eqn:encode_dephase}
    \frac{d}{dt}\rho = -i\frac{\omega^{(i)}}{2}\left[\sigma_z^{(i,k)},\rho\right] + \frac{\gamma}{2}\left(\sigma_z^{(i,k)}\rho \sigma_z^{(i,k)} - \rho\right),
\end{align}
where $\omega^{(i)}$ is the phase accumulation speed of each qubit sensor on node $i$, and $\gamma$ is the single-qubit dephasing rate. The quantum channel described by the above master equation can be decomposed into commuting parts $\mathcal{E}^{(i,k)}(t) = \mathcal{E}_\mathrm{enc}^{(i,k)}(t)\circ\mathcal{E}_\mathrm{dp}^{(i,k)}(t) = \mathcal{E}_\mathrm{dp}^{(i,k)}(t)\circ\mathcal{E}_\mathrm{enc}^{(i,k)}(t)$, where $\mathcal{E}_\mathrm{enc}^{(i,k)}(t)[\rho] = e^{-i\sigma_z^{(i,k)}\omega^{(i)} t/2}\rho e^{i\sigma_z^{(i,k)}\omega^{(i)} t/2}$ is the unitary encoding channel, where $x_i=\omega^{(i)} t$ is the accumulated phase for each qubit on node $i$. Also, $\mathcal{E}_\mathrm{dp}^{(i,k)}(t)[\rho] = (1 + e^{-\gamma t})\rho / 2 + (1 - e^{-\gamma t})\sigma_z^{(i,k)} \rho \sigma_z^{(i,k)} / 2$ is the single-qubit pure dephasing/phase flip channel. Then according to the commutation of the pure Hamiltonian parameter encoding channel and the pure dephasing channel on each qubit, the result of noisy quantum sensing dynamics of time $t$ is equal to the result of a new noisy initial state undergoing the noiseless encoding. The effective new initial state is
\begin{align}
    \rho_0' = \left[\bigotimes_{i,k}\mathcal{E}_\mathrm{dp}^{(i,k)}(t)\right][\rho_0],
\end{align}
which is the original initial state undergoing extra dephasing. From the structures of the stabilizer of the GHZ states and the extra Pauli error channel, we can derive the explicit form of the effective initial state as shown in App.~\ref{app:dephasing}. Then we have the QFI for $\theta_1$ from encoding under dephasing
\begin{align}\label{eqn:qfi_dephasing_phase}
    \mathcal{F}_\mathrm{net}(\theta_1) = d\frac{\left[F(2^dk)^n - k\right]^2 (2q-1)^{2nd}}{k(2^{nd} - 1)\left[Fk^n(2^{nd}-2) + k\right]} n^2,
\end{align}
where $q$ is the probability for a qubit to not undergo phase flip during the encoding dynamics, which does not depend on evolution time $t$ explicitly at this moment. The subscript emphasizes that the QFI emphasizes the use of quantum networks.

\subsection{Estimation of the average frequency}
In the above we consider the estimation of the average accumulated phase over an evolution of $t$ duration $\theta_1 = (\sum_{i=1}^d\omega^{(i)}t)/\sqrt{d}$. In practice, we may instead be interested in the angular frequency, thus now we consider the estimation of another global parameter $\tilde{\theta}_1 = (\sum_{i=1}^d\omega^{(i)})/\sqrt{d} = \theta_1 / t$. Treating evolution time $t$ as a fixed constant for a specific sensing cycle, the estimation variance of $\tilde{\theta}_1$ satisfies $\mathrm{Var}(\tilde{\theta}_1) = \mathrm{Var}(\theta_1)/t^2$. Thus we have the QFI for $\tilde{\theta}_1$ with sensing cycle of $t$ duration $\mathcal{F}_\mathrm{net}^{(\tilde{\theta}_1)}(t) = t^2\mathcal{F}_\mathrm{net}^{(\theta_1)}(t)$, where the superscript denotes the parameter with respect to which the QFI is defined. 

When we consider the best case scenario of local parameter estimation as the comparison baseline, we use $n$-qubit \textit{noiseless GHZ states} at each of the $d$ sensor nodes to estimate the local parameters through \textit{noiseless dynamics}, before combining the local sensing results to infer the global parameter. We know that the QFI dynamics for noiseless optimal local estimation strategy is $\mathcal{F}_\mathrm{local}(t) = n^2t^2$. Therefore, when we fix the evolution time $t$, the global estimation strategy with noises can potentially demonstrate quantum advantage when $\mathcal{F}_\mathrm{net}(t) > \mathcal{F}_\mathrm{local}(t)$. As shown in App.~\ref{app:dephasing} the above condition leads to the requirement on the evolution time 
\begin{align}\label{eqn:adv_cond_fix_t}
    t \lesssim \ln(Fk^{n-1}d)/2nd\gamma.
\end{align}
From the approximation we can evaluate the system performance by taking its partial derivatives with respect to $n,d,f,k$. As shown in App.~\ref{app:dephasing}, the partial derivative with respect to $k$ is generally the largest, as its inverse scales linearly in terms of $n$ and $d$, while other partial derivatives' inverses scale at least quadratically. This again suggests the relative robustness of distributed quantum sensing against network imperfections which only affect the initial state fidelity $F$, resonating with our previous results about noiseless encoding.

In practice, it is common to repeat the sensing cycle sequentially within a given total amount of time $T$ as resource, which is known as the sequential scheme of quantum sensing~\cite{huelga1997improvement}. Under such a setup, our objective then becomes to maximize the gained information within a fixed amount of time to minimize the parameter estimation variance. Therefore, we are interested in the quantity $\mathcal{F}(t)/t$, \textit{QFI per unit (evolution) time} which characterizes the average information accumulation speed during an evolution with duration $t$~\cite{saleem2023optimal,zang2025enhancing}. Now the noiseless local sensing baseline becomes $\mathcal{F}_\mathrm{local}(t)/t = n^2t$, which monotonically increases as the time for each single sensing cycle $t$ increases. In the noisy network case, we would like to optimize $t$ so that $\mathcal{F}_\mathrm{net}(t)/t$ is maximized. According to the error model, dephasing and state preparation errors are decoupled. Therefore, the optimization of the duration of each sensing cycle can be conveniently obtained as $t^* = 1 / (2 n d \gamma)$. If we further increase the sensing cycle time for the noiseless local baseline, the QFI per unit time will keep increasing. Therefore, in order for the noisy network sensing to demonstrate advantage, we must consider the limitation on the sensing cycle time, which can be described by the threshold time $t_\mathrm{th}$, s.t. $t\leq t_\mathrm{th}$. When $t_\mathrm{th}\geq t^*$, we have the necessary condition for potential quantum advantage as a constraint on $t$
\begin{align}
    t_\mathrm{th} \lesssim \frac{F k^{n-1}}{2 n e \gamma},
\end{align}
where $e$ is the base of natural logarithm. One can see that quantum advantage is impossible in a two-node, i.e. $d=2$, setup when $t_\mathrm{th}\geq t^*$, because $F k^{n-1} < e$, while the condition is still possible to be fulfilled in larger systems. In addition, we may consider that the sensing cycle time has even stricter limitation. When $t_\mathrm{th}<t^*$, the QFI per unit time for the noisy networked case monotonically increases as $t$ increases. Therefore, for a specific $t_\mathrm{th}<t^*$ we simply compare $\mathcal{F}_\mathrm{local}(t_\mathrm{th})/t_\mathrm{th}$ and $\mathcal{F}(t_\mathrm{th})/t_\mathrm{th}$, which leads to the necessary condition of quantum advantage that is in the same form as Eqn.~\ref{eqn:adv_cond_fix_t}.

\subsection{Local azimuthal measurement}
In the above we focus on QFI conditions for potential quantum advantage. Similar to previous discussion without dephasing, we need to consider the constraint of local measurement in the network setup. Not aiming to optimize over all possible LOCC measurement setup, we still focus on the family of local azimuthal measurement as previously used: $M(\alpha)=\left[O(\alpha)\right]^{\otimes nd}$ where $O(\alpha) = |\psi^+(\alpha)\rangle\langle\psi^+(\alpha)| - |\psi^-(\alpha)\rangle\langle\psi^-(\alpha)|$ with $|\psi^\pm(\alpha)\rangle = (|0\rangle\pm e^{i\alpha}|1\rangle)$. We can perform the same measurement angle optimization as in the noiseless case, and discover that the optimal condition is the same as for the noiseless encoding dynamics case studied previously. This means that the optimal local azimuthal measurement is independent of the sensing cycle time, which avoids further complication in system control. We can show that the optimal local azimuthal measurement only saturates the QCRB when there is no initial state preparation error before the encoding dynamics under dephasing, which corroborates with our previous result, as we have shown that when the initial state only has dephasing error, the optimal local azimuthal measurement can indeed saturate the QCRB. 

We can evaluate the sensitivity of the QCRB saturation by taking partial derivatives of the estimation variance using the optimal azimuthal measurement with respect to $n,d,F,k$. From the detailed expressions shown in App.~\ref{app:dephasing}, we can see that the extent to which the QCRB is saturated is almost not affected by the number of sensor nodes (number of local parameters) in the network, as the partial derivative with respect to $d$ vanishes for increasing $nd$. On the other hand, the relative difference between the QCRB and the estimation variance using the optimal local azimuthal measurement generally increases as the number of local sensors $n$ increases, due to accumulation of depolarizing error. Also, when $F\sim k$ the local entanglement generation quality $k$ has increasing impact on the saturation than the quantum network entanglement distribution fidelity $F$ as $n$ increases. These results again demonstrate that the performance of distributed quantum sensing is less sensitive to the quantum network properties, including number of nodes $d$ and entanglement distribution fidelity $F$. In addition, even though the QCRB is not saturated by the local azimuthal measurement in general, quantum advantage over the noiseless local sensing strategy is still possible, but with relatively stricter conditions.

\section{Conclusion and discussion}\label{sec:conclusion}
In summary, we uncover the impacts of noisy quantum networks on DQS. Our results offer new insights into realistic DQS, reveal the relation between entanglement and DQS quantum advantage, and can serve as practical guidance to real-world implementation of useful DQS. Notably, the insight about DQS with unitary encoding, that the DQS performance is more robust against network imperfections than local imperfections, applies to noisy encoding with dephasing as well. Moreover, our work may lead to numerous future work: The details of real DQS system deserve further evaluation from architectural perspectives; other probe states such as squeezed states~\cite{gessner2020multiparameter,pezze2025advances} may demonstrate interesting tradeoff between the difficulty to distribute over quantum networks and the robustness to noise.

For real-world implementation of DQS, continuous entanglement distribution~\cite{chakraborty2019distributed,kolar2022adaptive,inesta2023performance,ghaderibaneh2022pre,zang2024analytical,zhan2025design} together with buffer qubit architecture~\cite{wu2023qucomm,liu2025hardware} might be necessary to reduce latency of probe state preparation over quantum networks, while vacuum beam guide~\cite{huang2024vacuum} can potentially offer additional benefits. Optimization of Bell-state-based graph state distribution~\cite{meignant2019distributing,de2020protocols,fischer2021distributing,avis2023analysis,bugalho2023distributing,ghaderibaneh2023generation,fan2024optimized,shimizu2024simple,negrin2024efficient,huang2025peer} is important for larger scale DQS as well. In real quantum networks entanglement generation will fail probabilistically, and we provide the discussion on the estimation strategy under entanglement generation failure in App.~\ref{app:est_w_fail}. We have also simulated the GHZ state distribution process over a 3-node quantum network with SeQUeNCe~\cite{wu2021sequence}, demonstrating the possibility of DQS quantum advantage with realistic quantum network stacks as shown in App.~\ref{app:qn_sim}. The new features we develop in SeQUeNCe to reflect imperfections in entanglement distribution are completely open-source~\cite{sequence-github}, and can thus serve as valuable resources for future quantum network research. We plan to investigate more complex quantum network stacks in the future.

In addition, privacy and security~\cite{shettell2022private,bugalho2024private,hassani2024privacy} is a potentially important aspect of realistic quantum sensor network as well. We acknowledge previous exploration of noisy probe state's effect in non-distributed sensing~\cite{datta2011quantum,ouyang2021robust}. Finally, although we mainly assume finite-dimensional matter-based quantum sensors, it is noteworthy that photonic systems~\cite{zhang2021distributed,xia2020demonstration,guo2020distributed,zhao2021field,liu2021distributed,kim2024distributed} are also playing an important role in DQS.

\begin{acknowledgments}
    We thank Tian-Xing Zheng and Boxuan Zhou for helpful discussions.
    This material is based upon work supported by the U.S. Department of Energy Office of Science National Quantum Information Science Research Centers. 
    Work performed at the Center for Nanoscale Materials, a U.S. Department of Energy Office of Science User Facility, was supported by the U.S. DOE, Office of Basic Energy Sciences, under Contract No. DE-AC02-06CH11357.
    A.Z. and T.Z. are also supported by the NSF Quantum Leap Challenge Institute for Hybrid Quantum Architectures and Networks (NSF Grant No. 2016136), and the Marshall and Arlene Bennett Family Research Program.
\end{acknowledgments}

\appendix

\onecolumngrid

\section{Probe state assembly from bipartite entanglement for DQS}\label{app:architecture}
Here we describe two methods to prepare multipartite entangled probe state for DQS with bipartite entanglement distributed by the quantum network and LOCC, namely gate teleportation and GHZ merging. In practice the quantum networks might be hybrid, in that different functions are realized by different physical systems. For instance, communication qubits which generate and distribute entanglement might be different from the quantum sensors used for DQS, thus the development of quantum interconnects~\cite{awschalom2021development,awschalom2022roadmap} is necessary.

\subsection{Gate teleportation}
A Bell pair can be used to perform CNOT teleportation, shown in Fig.~\ref{fig:assembly}(a). Then we can directly run the standard GHZ generation circuit, assuming the availability of two-qubit gates between communication qubits and sensor qubits. The standard GHZ generation circuit is shown in the first part of Fig.~\ref{fig:assembly}(b). 

We estimate the resources needed for this approach as follows. All generated Bell pairs are consumed. To generate probe state of size $N$, we need $N$ sensor qubits (one of them is the center) when performing CNOT teleportation. In total $(N-1)$ CNOTs are needed, thus $(N-1)$ Bell pairs. During each gate teleportation after two single qubit measurements, (in ideal case) there are 1/4 probability that no local unitary correction is needed, 1/2 probability that one local unitary correction is needed, and 1/4 probability that two local unitary corrections are needed. Moreover, the two single qubit measurements are in different basis. Since measurements are usually native in one basis, we may need an additional single-qubit unitary to transform measurement basis. Altogether we need: $(3N-2)$ qubits, $(2N-2)$ of which for $(N-1)$ Bell pairs; $(2N-2)$ single-qubit measurements; $(2N-2)$ local CNOTs; and on average $(N-1)$ (or $(2N-2)$) single-qubit gates. The estimation is summarized in Table~\ref{tab:assembly_estimation}.

\subsection{GHZ merging}
Merging of two GHZ states can be achieved with local CNOT and measurement feedforward, as shown explicitly in the following where we consider the merging of two GHZ states with $N$ and $M$ qubits, respectively:
\begin{align}
    &|\mathrm{GHZ}_N\rangle|\mathrm{GHZ}_M\rangle \propto (|\underbrace{0\dots 0}_{N\times~0}\rangle + |\underbrace{1\dots 1}_{N\times~1}\rangle)(|\underbrace{0\dots 0}_{M\times~0}\rangle + |\underbrace{1\dots 1}_{M\times~1})\rangle\\
    \xrightarrow[]{\mathrm{CNOT}_{N,N+1}}& (|\underbrace{0\dots 0}_{(N-1)\times~0}00\underbrace{0\dots 0}_{(M-1)\times~0}\rangle + |\underbrace{0\dots 0}_{(N-1)\times~0}01\underbrace{1\dots 1}_{(M-1)\times~1}\rangle + |\underbrace{1\dots 1}_{(N-1)\times~1}11\underbrace{0\dots 0}_{(M-1)\times~0}\rangle + |\underbrace{1\dots 1}_{(N-1)\times~1}10\underbrace{1\dots 1}_{(M-1)\times~1}\rangle).\nonumber
\end{align}
If we measure the target qubit of CNOT (with index $N+1$ when all qubits are indexed from 1 through $N+M$) in computational basis, when the measurement outcome is 0, the post-measurement state is exactly a GHZ state with $N+M-1$ qubits, and when the outcome is 1, the post-measurement state can be transformed into a GHZ state with $N+M-1$ qubits by applying $\min\{N,M-1\}$ local X gates to flip the $|0\rangle$ and $|1\rangle$, as shown in Fig.~\ref{fig:assembly}(b).

We estimate the resources needed for this approach as follows. Merging does not consume all qubits in Bell pairs. We need to perform $(N-1)$ mergings, and for each merging there is 1/2 probability that one local unitary correction is needed, and 1/2 probability that no local unitary correction is needed. Altogether we need: $(2N-2)$ qubits, all for $(N-1)$ Bell pairs; $(N-1)$ single-qubit measurements; $(N-1)$ local CNOTs; and on average $(N-1)/2$ single-qubit gates. The estimation is summarized in Table~\ref{tab:assembly_estimation}.

However, if we consider the separation between communication qubits and sensor qubits, we need to first generate the GHZ states across communication qubits from merging and then perform local SWAP gates to swap the GHZ state from communication qubits to the sensor qubits. In this case, we need $N$ sensor qubits, and $N$ local SWAP gates between communication and sensor qubits (which can be decomposed into 3 CNOTs).

\begin{figure}[t]
    \centering
    \includegraphics[width=0.6\linewidth]{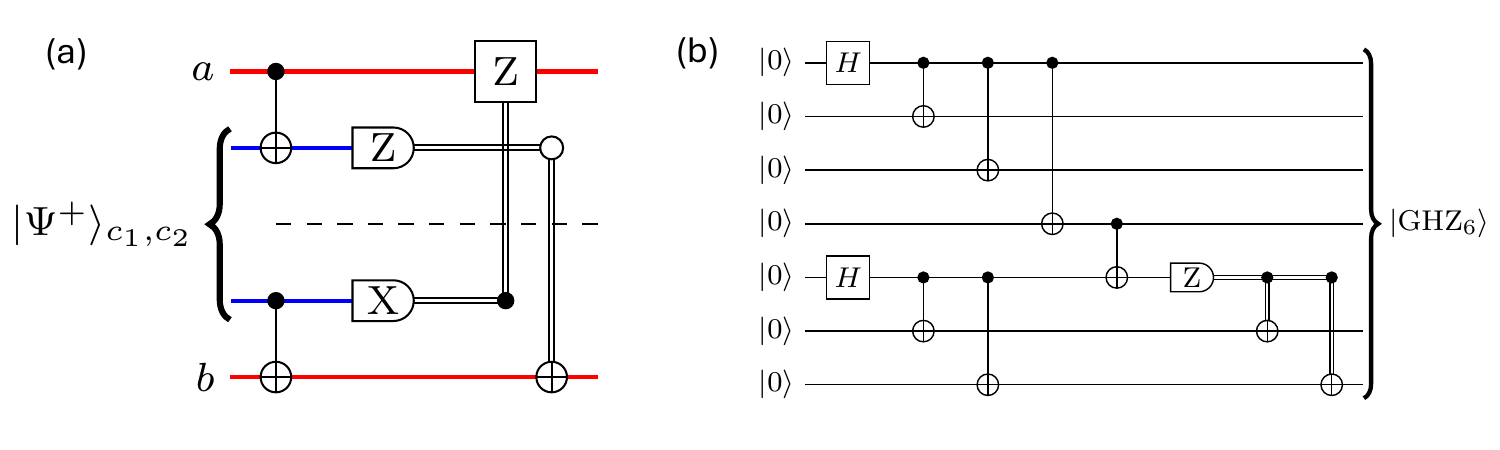}
    \caption{Circuits of (a) CNOT teleportation and (b) GHZ merging. The standard GHZ generation circuit is included in the GHZ merging circuit as the first part.}
    \label{fig:assembly}
\end{figure}

\begin{table}[t]
    \caption{\label{tab:assembly_estimation}Resource estimation for $N$-qubit GHZ state assembly from $(N-1)$ Bell pairs by CNOT teleportation and GHZ merging.}
    \begin{ruledtabular}
    \begin{tabular}{cccccc}
    & qubits\footnote{Including $(2N-2)$ qubits of the Bell pairs.} & 1-qubit measurements & 2-qubit gates & 1-qubit gates\footnote{Not including measurement basis transformation.} \\
    \colrule
    CNOT teleportation & $3N-2$ & $2N-2$ & $2N-2$\footnote{All are CNOTs.} & $N-1$\\
    GHZ merging & $2N-2$\footnote{All are from the distributed Bell pairs between communication qubits. $N$ sensor qubits are still needed.} & $N-1$ & $N-1$\footnote{Need $N$ SWAPs between sensor qubits and communication qubits.} & $(N-1)/2$\\
    \end{tabular}
    \end{ruledtabular}
\end{table}

\section{Justification of GHZ-diagonal initial probe state}\label{app:twirling}
Here we elaborate on the mechanisms that allow us to have GHZ-diagonal states as the initial probe for DQS.

\subsection{Pauli twirling}
Pauli twirling~\cite{dur2005standard,emerson2007symmetrized,dankert2009exact} can be utilized in two steps during the GHZ state distribution by the quantum network to make the final state in a GHZ-diagonal form. First of all, we distribute two-qubit Bell states between quantum sensor nodes. For each Bell state we can perform bilocal Pauli twirling: $\rho\to\sum_{i=1}^4(P_i\otimes P_i)\rho(P_i\otimes P_i)/4$ where $P_i$ are 1-qubit Pauli operators and the summation is over all 4 Pauli operators. It is easy to check that the resulting state is Bell-diagonal. Then, we need to assemble the Bell states together to create the global GHZ state (as reviewed in App.~\ref{app:architecture}). The assembly process involves local Clifford operations. Remarkably, the noise introduced by quantum gates, and in general quantum circuits that can be compiled into layers of quantum gates, can be effectively transformed into incoherent errors with Pauli channel an example, using techniques such as randomized compiling~\cite{wallman2016noise,hashim2021randomized}. As a result, since the assembly process only applies Clifford operations and introduces Pauli errors to Bell-diagonal states, the final state is effectively a pure GHZ state undergoing a Pauli channel, which is in GHZ-diagonal form. The twirling operations are also all local, so in principle applicable to quantum network scenarios.

\subsection{GHZ stabilizer twirling}
On the other hand, we can use GHZ stabilizer twirling~\cite{toth2007efficient} at the end of GHZ state distribution, to turn any GHZ state into GHZ-diagonal form without changing the diagonal elements in the density matrix. Explicitly, the GHZ state stabilizer twirling is a process of randomly applying the elements of the stabilizer group $\mathcal{G}=\{g_i\}$ for the GHZ state $\rho\to\sum_ig_i\rho g_i/\vert\mathcal{G}\vert$, where we have considered that stabilizer group elements are all Pauli strings so we have dropped the Hermitian conjugate. For completeness, we give a simple proof of the mechanism of GHZ stabilizer twirling.
\begin{proposition}
    GHZ stabilizer twirling turns any state into GHZ-diagonal form with diagonal elements unchanged.
\end{proposition}
\begin{proof}
    For an arbitrary $n$-qubit state $\rho$, we can always represent its density matrix in the GHZ basis. Now we explicitly apply the GHZ stabilizer twirling on $\rho$. 
    
    \textit{(1) Diagonal elements of the density matrix in GHZ basis.} For the element corresponding to the standard GHZ state this follows directly from the definition of stabilizer. Other GHZ basis states $|\psi\rangle$ can be obtained through applying Pauli strings to the standard GHZ state $|\psi\rangle=P|\mathrm{GHZ}\rangle$, where $P$ is a certain $n$-qubit Pauli string. Then the application of GHZ stabilizer elements on $|\psi\rangle$ gives 
    \begin{align}
        g_i|\psi\rangle = g_iP|\mathrm{GHZ}\rangle = \left(Pg_i - [P,g_i]\right)|\mathrm{GHZ}\rangle = (-1)^{f([P,g_i])}P|\mathrm{GHZ}\rangle = (-1)^{f([P,g_i])}|\psi\rangle,
    \end{align}
    where we define $f(\cdot)$ to output zero when the input is zero and output one otherwise, and we have used the fact that Pauli strings either commute or anti-commute. In other words, the application of GHZ stabilizers on a GHZ basis state gives either the basis state itself or with an additional $(-1)$ factor. For GHZ basis states $|\psi\rangle$ we have $g_i|\psi\rangle\langle\psi|g_i = \left[(-1)^{f([P,g_i])}\right]^2|\psi\rangle\langle\psi| = |\psi\rangle\langle\psi|$. Therefore, any diagonal element of the density matrix in GHZ basis will not be changed under GHZ stabilizer twirling.

    \textit{(2) off-diagonal elements.} Recall the the stabilizer group is Abelian and the $n$-qubit GHZ stabilizer group is generated by $n$ commuting $n$-qubit Pauli strings. Therefore, the additional factor, i.e. $(+1)$ or $(-1)$, coming from applying GHZ stabilizer element to GHZ basis states can be determined by the additional factor obtained from applying the GHZ stabilizer generators. For each GHZ basis state $|\psi_j\rangle$ where $j=1,\dots,2^n$ is the index, we denote the set of generators which give plus one factor $\mathcal{P}_j$ and the set of generators which give minus one factor $\mathcal{M}_j$, such that $\vert\mathcal{P}_j\vert+\vert\mathcal{M}_j\vert=n$. Note that for different $j$ it is impossible to have identical $\mathcal{P}_j$ and $\mathcal{M}_j$, because if so the two states are identical and should have the same index. 

    We have that the application of half of the stabilizer elements will result in an additional $(-1)$ factor, while the other half will result in $(+1)$ factor. This can be seen by considering $\mathcal{P}_j$ and $\mathcal{M}_j$, and let $\vert\mathcal{P}_j\vert=p_j$ and $\vert\mathcal{M}_j\vert=m_j$. The number of stabilizer elements that result in an additional $(-1)$ factor can be calculated as
    \begin{align}
        \left.2^{p_j}\sum_{i=0}^{\frac{m-1}{2}}\binom{m_j}{2i+1}\right\vert_{\mathrm{odd}~m_j} = \left.2^{p_j}\sum_{i=0}^{\frac{m-2}{2}}\binom{m_j}{2i+1}\right\vert_{\mathrm{even}~m_j} = 2^{m_j+p_j-1} = 2^{n-1} = \frac{\vert\mathcal{G}\vert}{2}.
    \end{align}
    This directly implies that $|\psi_j\rangle\langle\mathrm{GHZ}|$ and $|\mathrm{GHZ}\rangle\langle\psi_j|$ will vanish after GHZ stabilizer twirling due to the cancellation of $(-1)$ and $(+1)$ factor terms. For the application of GHZ stabilizer twirling to general $|\psi_j\rangle\langle\psi_k|$, we consider that arbitrary two GHZ basis states can be transformed into each other via application of a Pauli string. Therefore, we have
    \begin{align}
        g_i|\psi_j\rangle\langle\psi_k|g_i = g_i|\psi_j\rangle\langle\psi_j|Pg_i = (-1)^{f([P,g_i])}g_i|\psi_j\rangle\langle\psi_j|g_iP = (-1)^{f([P,g_i])}|\psi_j\rangle\langle\psi_j|P = (-1)^{f([P,g_i])}|\psi_j\rangle\langle\psi_k|.
    \end{align}
    Consequently, the above argument of evaluating the additional factor introduced from the application of GHZ stabilizer element still applies. We thus know that again half of the stabilizer elements give $f([P,g_i])=0$ while the other half gives $f([P,g_i])=1$. In the end, under GHZ stabilizer twirling any $|\psi_j\rangle\langle\psi_k|$ vanishes as long as $j\neq k$.
\end{proof}
The proof applies to arbitrary graph states, and thus any state can be twirled to be diagonal in arbitrary graph state basis. Moreover, the stabilizer elements are simply Pauli strings, so the stabilizer twirling can in principle be implemented over quantum networks.

\section{QFIM under unitary encoding}\label{app:qfi_ghz}
Suppose the encoded state has spectral decomposition $\rho=\sum_a\lambda_a|\psi_a\rangle\langle\psi_a|$ with all $\lambda_a$ being non-zero. Then the QFIM can be expressed as~\cite{liu2020quantum}
\begin{align}
    \mathcal{F}_{ij} =& \sum_a\frac{(\partial_i\lambda_a)(\partial_j\lambda_a)}{\lambda_a} + \sum_a 4\lambda_a\mathrm{Re}(\langle\partial_i\psi_a|\partial_j\psi_a\rangle) - \sum_{a,b}\frac{8\lambda_a\lambda_b}{\lambda_a+\lambda_b}\mathrm{Re}(\langle\partial_i\psi_a|\psi_b\rangle\langle\psi_b|\partial_j\psi_a\rangle).
\end{align}
We note that there are different equivalent ways of calculating QFIM, as documented in some other extensive reviews, for instance~\cite{paris2009quantum,petz2011introduction,toth2014quantum}.

Then we consider that the parameter dependence of $\rho$ comes from unitary encoding $U(x)$ of an $x$-independent initial state $\rho_0=\sum_a\lambda_{a0}|\psi_{a0}\rangle\langle\psi_{a0}|$, i.e. $\rho = U(x)\rho_0U(x)^\dagger$. Now the QFIM can be re-expressed in the convention of~\cite{liu2020quantum} as:
\begin{align}\label{Eq:QFIM}
    \mathcal{F}_{ij} = \sum_a 4\lambda_{a0}\mathrm{cov}_{|\psi_{a0}\rangle}(G_i,G_j) - \sum_{a\neq b}\frac{8\lambda_{a0}\lambda_{b0}}{\lambda_{a0}+\lambda_{b0}}\mathrm{Re}(\langle\psi_{a0}|G_i|\psi_{b0}\rangle\langle\psi_{b0}|G_j|\psi_{a0}\rangle),
\end{align}    
where $\mathrm{cov}_{|\psi\rangle}(A,B)$ denotes the covariance of observables $A$ and $B$ under a pure state $|\psi\rangle$:
\begin{align}
    \mathrm{cov}_{|\psi\rangle}(A,B) = \frac{1}{2}\langle\psi|\{A,B\}|\psi\rangle - \langle\psi|A|\psi\rangle\langle\psi|B|\psi\rangle,
\end{align}
and $G_i$ is the generator of parameter $x_i$:
\begin{align}
    G_i = i(\partial_iU^\dagger)U = -iU^\dagger(\partial_iU).
\end{align}

\subsection{Derivation of QFIM for GHZ-diagonal states}
Consider unitary parameter encoding $U(x) = \exp\left[-i\left(\sum_{i=1}^{d}x_iH_i\right)\right]$, where $d$ denotes the total number of sensor nodes in the network, $H_i=\frac{1}{2}\sum_{k=0}^{n-1}\sigma^{(i,k)}_z$ is the local collective spin of $n$ qubit sensors on node $i$, and the parameters $x_i$ physically correspond to accumulated phases through Hamiltonian evolution. The generators according to the above definition are $G_i = -H_i$, where the negative sign does not matter as generators appear in pairs in Eqn.~\ref{Eq:QFIM}. Then for $\rho_0$ which is a GHZ-diagonal state, we have that:
\begin{align}
    \mathrm{cov}_{|\psi_{m0}\rangle}(G_i,G_j) =& \mathrm{cov}_{|\psi_{m0}\rangle}(H_i,H_j) = \frac{1}{2}\langle\psi_{m0}|\{H_i,H_j\}|\psi_{m0}\rangle,\ \forall m,
\end{align}
because weight-1 Pauli strings do not stabilize the standard GHZ states. Meanwhile, we have that:
\begin{align}
    \langle\psi|\sigma_z^{(a)}\sigma_z^{(b)}|\psi\rangle = 1,\ \forall a,b,
\end{align}
where $a,b$ are indices for the qubits in $|\psi\rangle$ which is a GHZ-basis state. This is because weight-2 Pauli strings with only Pauli Z operators stabilize the standard GHZ states. Therefore, for any GHZ-diagonal $\rho_0$, the first single-index sum in Eqn.~\ref{Eq:QFIM} is always a constant:
\begin{align}
    \sum_a 4\lambda_{a0}\mathrm{cov}_{|\psi_{a0}\rangle}(G_i,G_j) =& \sum_a 4\lambda_{a0}\mathrm{cov}_{|\psi_{a0}\rangle}\left(\frac{1}{2}\sum_{k=0}^{n-1}\sigma^{(i,k)}_z,\frac{1}{2}\sum_{l=0}^{n-1}\sigma^{(j,l)}_z\right)\nonumber\\
    =& \sum_a \lambda_{a0}\sum_{k=0}^{n-1}\sum_{l=0}^{n-1}\mathrm{cov}_{|\psi_{a0}\rangle}\left(\sigma^{(i,k)}_z,\sigma^{(j,l)}_z\right) = n^2,\ \forall i,j.
\end{align}    
Then the derivation of the QFIM reduces to the second term in Eqn.~\ref{Eq:QFIM}.

There are $2^{nd}$ orthonormal $nd$-qubit GHZ states across $d$ sensor nodes where each node holds $n$ qubits, which can be labeled by the binary string from 0 through $2^{nd-1}-1$: These states are superpositions of two computational basis states which correspond to binary strings of $b\in\{0,1,\dots,2^{nd-1}-1\}$ and $2^{nd}-b-1$, respectively. The additional $2^{nd-1}$ states come from adding an additional $\pi$ relative phase between the two computational basis states. For any GHZ state $|\psi\rangle$, the application of a single Pauli Z will lead to a $\pi$ change in the relative phase between the two computational basis states in superposition. Therefore, $\langle\psi_{a0}|G_i|\psi_{b0}\rangle$ is either zero due to orthogonality between $|\psi_{a0}\rangle$ and $G_i|\psi_{b0}\rangle$, or one, and it takes unit value only when $|\psi_{a0}\rangle$ and $|\psi_{b0}\rangle$ correspond to the same length-$nd$ binary string and have relative phases which differ by $\pi$.

For a general $nd$-qubit GHZ-diagonal state $\rho_0=\sum_a\lambda_{a0}|\psi_{a0}\rangle\langle\psi_{a0}|$ where each node has $n$ qubits, we use $\mathcal{S}$ to denote the set of index pairs $(a,b)$ such that $|\psi_{a0}\rangle$ and $|\psi_{b0}\rangle$ are GHZ states as superposition of the same pair of computational basis states but with opposite relative phase. Note that for such pairs, only one of $(a,b)$ and $(b,a)$ is included in set $\mathcal{S}$. Then we have:
\begin{align}\label{Eq:noise_constant}
    &\sum_{a\neq b}\frac{8\lambda_{a0}\lambda_{b0}}{\lambda_{a0}+\lambda_{b0}}\mathrm{Re}(\langle\psi_{a0}|G_i|\psi_{b0}\rangle\langle\psi_{b0}|G_j|\psi_{a0}\rangle) = n^2\sum_{(a,b)\in\mathcal{S}}\frac{4\lambda_{a0}\lambda_{b0}}{\lambda_{a0}+\lambda_{b0}} = Cn^2,\ \forall i,j,
\end{align}
which again does not depend on $i,j$, while in practice can be dependent on $n$. 

The form of the series in Eqn.~\ref{Eq:noise_constant} implies that only Pauli Z errors affect the QFIM, as result of the assumed encoding channel, i.e. a z-axis coupling. Additionally, the maximum value of $C$ is 1 for GHZ-diagonal states, and it is achieved if and only if every pair of states in set $\mathcal{S}$ have identical eigenvalues. This can be seen as follows:
\begin{align}
    C = \sum_{(a,b)\in\mathcal{S}}\frac{4\lambda_{a0}\lambda_{b0}}{\lambda_{a0}+\lambda_{b0}} = \sum_{(a,b)\in\mathcal{S}}\frac{(\lambda_{a0}+\lambda_{b0})^2-(\lambda_{a0}-\lambda_{b0})^2}{\lambda_{a0}+\lambda_{b0}} = 1 - \sum_{(a,b)\in\mathcal{S}}\frac{(\lambda_{a0}-\lambda_{b0})^2}{\lambda_{a0}+\lambda_{b0}},
\end{align}
where the subtracted term is non-negative, and it becomes zero if and only if $\lambda_{a0}=\lambda_{b0}$, $\forall(a,b)\in\mathcal{S}$.

Combining the above results, we have the QFIM with respect to local parameters $x=(x_1,\dots,x_d)^T$
\begin{align}
    \mathcal{F}(x) = (1-C)n^2
    \begin{pmatrix}
        1 & 1 & \dots & 1 \\
        1 & 1 & \dots & 1 \\
        \vdots & \vdots & \ddots & \vdots \\
        1 & 1 & \dots & 1 \\
    \end{pmatrix}.
\end{align}
According to the parameter estimation problem of our interest, we have that $v_1\propto (1,\dots,1)^T\in\mathbb{R}^d$. For concreteness, we may choose $v_1=(1,\dots,1)^T/\sqrt{d}$, which is a normalized vector under 2-norm. As now we only focus on estimating one parameter $v_1^Tx$, we can construct an orthonormal matrix $M=(v_1,\dots,v_d)^T$ such that $v_i^Tv_j=\delta_{ij}$. 
When we consider $d$ new derived parameters $\theta = (f_1(x),\dots,f_d(x))^T\in\mathbb{R}^d$ from the original parameters $x$, the QFIM of $\theta$ can be expressed in terms of the QFIM of $x$ of as~\cite{paris2009quantum}:
\begin{align}\label{Eq:QFIM_transform}
    \mathcal{F}(\theta) = J^T\mathcal{F}(x)J,
\end{align}
where $J$ is the Jacobian matrix whose matrix elements are $J_{ij} = \partial x_i/\partial\theta_j$.
Then according to Eqn.~\ref{Eq:QFIM_transform} we have the QFIM with respect to new parameters $\theta=(v_1^Tx,\dots, v_d^Tx)$:
\begin{align}
    \mathcal{F}(\theta) =& \sqrt{d}(1-C)n^2
    \begin{pmatrix}
        v_1^T \\
        v_2^T \\
        \vdots \\
        v_d^T \\
    \end{pmatrix}
    \begin{pmatrix}
        v_1 & v_1 & \dots & v_1
    \end{pmatrix}
    \begin{pmatrix}
        v_1 & v_2 & \dots & v_d
    \end{pmatrix}\nonumber\\
    =& \sqrt{d}(1-C)n^2
    \begin{pmatrix}
        1 & 1 & \dots & 1 \\
        0 & 0 & \dots & 0 \\
        \vdots & \vdots & \ddots & \vdots \\
        0 & 0 & \dots & 0 \\
    \end{pmatrix}
    \begin{pmatrix}
        v_1 & v_2 & \dots & v_d
    \end{pmatrix}\nonumber\\
    =& d(1-C)n^2
    \begin{pmatrix}
        v_1^T \\
        0 \\
        \vdots \\
        0 \\
    \end{pmatrix}
    \begin{pmatrix}
        v_1 & v_2 & \dots & v_d
    \end{pmatrix} = d(1-C)n^2
    \begin{pmatrix}
        1 & 0 & \dots & 0 \\
        0 & 0 & \dots & 0 \\
        \vdots & \vdots & \ddots & \vdots \\
        0 & 0 & \dots & 0 \\
    \end{pmatrix}.
\end{align}

\subsection{Proof of Proposition~\ref{prop:worst_case_qfi}}
\begin{proof}
    We label the the eigenvalues of the density matrix in the following way. GHZ states can be expressed as a superposition of two computational basis states corresponding to two binary strings. Further, one bit string corresponds to a smaller integer $n\in\{0,1,\dots,2^{d-1}-1\}$ while the other corresponds to a larger integer $m=2^d-n-1$. For a GHZ state corresponding to integers $n$ and $m$ with $n<m$, we let its index be $2n$ if the relative phase between two computational basis states is $0$, and $2n+1$ otherwise. For instance, the standard GHZ state $|\mathrm{GHZ}_d\rangle$ has index 0, and $Z|\mathrm{GHZ}_d\rangle$ has index 1. Then the eigenvalue corresponding to $|\mathrm{GHZ}_d\rangle$ is $\lambda_{00}=F$. For the rest $2^d-1$ eigenvalues with $i\in\{1,2,\dots,2^d-1\}$, we can express them as $\lambda_{i0}=(1-F)p_i$, s.t. $p_i\in[0,1]$ and $\sum_{i=1}^{2^d-1}p_i = 1$.

    The objective then becomes finding the combination of $(p_1,\dots,p_{2^d-1})$ which gives the highest $C$ under the above constraints. The constraints clearly define a closed and compact region $\mathcal{R}$. Then according to the extreme value theorem, there exist a maximum value and a minimum value for any continuous function of $(p_1,\dots,p_{2^d-1})$ on $\mathcal{R}$, and the extreme values must be taken either on the boundary of $\mathcal{R}$, or at critical points inside $\mathcal{R}$.
    
    Now $C$ can be re-written as:
    \begin{align}
        C = \frac{4F(1-F)p_1}{F + (1-F)p_1} + \sum_{i=1}^{2^{d-1}-1}\frac{4(1-F)p_{2i}p_{2i+1}}{p_{2i} + p_{2i+1}},
    \end{align}
    which is obviously continuous on $\mathcal{R}$. We first take the partial derivatives with respect to $p_i$:
    \begin{align}
        &\frac{\partial}{\partial p_1}C = \frac{4(1-F)F^2}{\left[F + (1-F)p_1\right]^2} > 0,\\
        &\frac{\partial}{\partial p_i}C = \frac{4(1-F)p_{i+(-1)^{i\,\mathrm{mod}\,2}}^2}{\left(p_i + p_{i+(-1)^{i\,\mathrm{mod}\,2}}\right)^2} \geq 0,~i>1,
    \end{align}
    which means that there is no critical point inside $\mathcal{R}$, so the maximum value can only be on the boundary of $\mathcal{R}$. 
    
    The boundary of $\mathcal{R}$ can be divided into $2^d-1$ parts $\{B_i\}$, each determined by $p_i=0$, $i\in\{1,2,\dots,2^d-1\}$. Under the equality constraint $\sum_{i=1}^{2^d-1}p_i = 1$, the above partial derivatives suggest that there is still no critical point inside any $B_i$. Then the extreme values on the boundary should be on the boundary of boundary $B_i$, i.e, $\{B_{ij}\}$, where the subscript means the $j$-th part of the boundary of $B_i$. The conclusion of no inside critical point holds until we have reduced the boundary into zero-dimension points, characterized by $p_i=1$.

    Finally, we can compare the $C$ values for all the $2^d-1$ choices of $(p_1,\dots,p_{2^d-1})$. If $p_i=1$ and $i>1$, $C=0$, and if $p_1=1$, $C=4F(1-F)>0$. Therefore, the maximum value of $C$ on $\mathcal{R}$ is $4F(1-F)$ corresponding to $\rho=F|\mathrm{GHZ}_d\rangle\langle\mathrm{GHZ}_d| + (1-F)Z|\mathrm{GHZ}_d\rangle\langle\mathrm{GHZ}_d|Z$, which gives the lowest QFI $\mathcal{F}=d(1-C)=d(2F-1)^2$.
\end{proof}

\subsection{Proof of Proposition~\ref{prop:ent_detect}}
\begin{proof}
    Firstly, for a pure state $|\varphi\rangle$ undergoing unitary parameter encoding whose all generators commute with each other as in our case, the QFIM can be calculated as 
    \begin{align}
        \mathcal{F}_{ij} = 4\left[\frac{1}{2}\langle\varphi|\{H_i,H_j\}|\varphi\rangle - \langle\varphi|H_i|\varphi\rangle\langle\varphi|H_j|\varphi\rangle\right].
    \end{align}
    Now we consider a fully separable $d$-qubit pure state $|\varphi\rangle_\mathrm{sep} = \bigotimes_{i=1}^d|\varphi_i\rangle$. Its QFIM under the aforementioned unitary parameter encoding channel is thus
    \begin{align}
        \mathcal{F}_{ij} = \frac{1}{2}\langle\varphi|_\mathrm{sep}\{H_i,H_j\}|\varphi\rangle_\mathrm{sep} - \langle\varphi|_\mathrm{sep}H_i|\varphi\rangle_\mathrm{sep}\langle\varphi|_\mathrm{sep}H_j|\varphi\rangle_\mathrm{sep} = \left[1 - \left(\langle\varphi_i|\sigma_z|\varphi_i\rangle\right)^2\right]\delta_{ij} \leq \delta_{ij}.
    \end{align}
    Therefore, for any fully separable $d$-qubit pure state $\mathcal{F}\leq I$, i.e. $I-\mathcal{F}$ is positive semidefinite. 
    
    Then we consider arbitrary fully separable $d$-qubit state that can be expressed as a convex combination of tensor products of subsystem quantum states $\rho_\mathrm{sep}=\sum_ip_i\rho_i^{(1)}\otimes\rho_i^{(2)}\otimes\dots\otimes\rho_i^{(d)}$, where $p_i\geq 0,~\sum_ip_i=1$, and in our case $\rho_i^{(j)}$ is an arbitrary single-qubit density matrix. Notice that every $\rho_i^{(j)}$ can be decomposed as a convex combination of single-qubit pure state, so any fully separable state can be decomposed as a convex combination of $d$-qubit fully separable pure states $\rho_\mathrm{sep}=\sum_jq_j|\varphi_j^{(1)}\rangle\langle\varphi_j^{(1)}|\otimes\dots\otimes|\varphi_j^{(d)}\rangle\langle\varphi_j^{(d)}| = \sum_jq_j|\varphi_j\rangle\langle\varphi_j|_\mathrm{sep}$, where $q_j\geq 0$ and $\sum_jq_j=1$. 
    
    Recall that the QFIM is convex~\cite{liu2020quantum}, so we have
    \begin{align}
        \mathcal{F}(\rho_\mathrm{sep}) = \mathcal{F}\left(\sum_jq_j|\varphi_j\rangle\langle\varphi_j|_\mathrm{sep}\right) \leq \sum_jq_j\mathcal{F}(|\varphi_j\rangle\langle\varphi_j|_\mathrm{sep}) \leq I.
    \end{align}
    Also, since QFIM is real symmetric and positive semidefinite, we have the orthogonal decomposition $\mathcal{F}(\rho_\mathrm{sep}) = Q\Lambda Q^T$, where $QQ^T=Q^TQ=I$ and $\Lambda$ is diagonal s.t. $0\leq\Lambda\leq I$. We then transform from the current basis of ``natural'' parameters to another orthonormal basis including $\theta_1=v_1^Tx$, through an orthogonal transform $M$ of $\mathcal{F}(\rho_\mathrm{sep})$
    \begin{align}
        \tilde{\mathcal{F}}(\rho_\mathrm{sep}) = M\mathcal{F}(\rho_\mathrm{sep})M^T = MQ\Lambda Q^TM^T = O\Lambda O^T,
    \end{align}
    where we use $\tilde{\mathcal{F}}$ to denote the transformed QFIM for the new parameters, and $O=MQ$ is another orthogonal transform s.t. $OO^T=O^TO=I$.

    For estimation of a single parameter in a multiparameter scenario, we have the following bound~\cite{proctor2017networked}
    \begin{align}
        \mathrm{Var}(\hat{\theta}_1) \geq \frac{\left(\tilde{\mathcal{F}}^{-1}(\rho_\mathrm{sep})\right)_{11}}{\mu} \geq \frac{1}{\mu\left(\tilde{\mathcal{F}}(\rho_\mathrm{sep})\right)_{11}},
    \end{align}
    where the first inequality is always saturable when we only focus on $\theta_1$, while the second, though simpler, is not always saturable. Then we explicitly evaluate $\left(\tilde{\mathcal{F}}(\rho_\mathrm{sep})\right)_{11}$ to bound it from the above
    \begin{align}
        \left(\tilde{\mathcal{F}}(\rho_\mathrm{sep})\right)_{11} = \sum_{i=1}^d\lambda_i\left(w_1^{(i)}\right)^2 \leq \sum_{i=1}^d\left(w_1^{(i)}\right)^2 = w_1^Tw_1= 1,
    \end{align}
    where we have written $O=(w_1,\dots,w_d)^T$ with $w_i^Tw_j=\delta_{ij}$, while $w_i^{(i)}$ is the $j$-th element of $w_i$, and $\lambda_i$ are the diagonal elements of $\Lambda$. From the upper bound of $\left(\tilde{\mathcal{F}}(\rho_\mathrm{sep})\right)_{11}$ the desired lower bound of $\theta_1$ estimation variance is then obvious.
\end{proof}
Notice that $1/\mu$ is exactly the estimation variance of the optimal local sensing strategy. The proof can be straightforwardly extended to the scenario where each sensor contains $n$ qubits, to show that any fully separable state with respect to the partition of $d$ sensors cannot achieve $\theta_1$ estimation variance lower than $1/(n^2\mu)$.

\section{Details for unitary encoding with noisy local entanglement generation}\label{app:noisy_eg}
We first justify the phenomenological model for noisy local entanglement generation which we use in this work when the errors are not too biased. Then we give additional details about the maximum number of local sensors. We also include noise in local entanglement generation for the optimal local sensing strategy, where each node aims to generate a GHZ state to estimate the local parameter.

\subsection{Justification of the phenomenological model for noisy local entanglement generation}
We consider noisy CNOT models where there is $F_\mathrm{CNOT}$ probability of realizing the noiseless CNOT and $(1-F_\mathrm{CNOT})$ probability of applying Pauli noise channel on the involved qubits. When $(d-1)(n-1)$ CNOT gates are applied, there will be $F_\mathrm{CNOT}^{(d-1)(n-1)}\sim k^{n-1}$ probability to obtain the ideal output state, and $(1-F_\mathrm{CNOT}^{(d-1)(n-1)})\sim(1-k^{n-1})$ probability to have additional Pauli errors applied to the final state. Our initial state is a GHZ-diagonal state with GHZ fidelity $F$. Thus, we have $Fk^{n-1}$ probability of directly obtaining the ideal GHZ state, $(1-F)k^{n-1}$ probability of obtaining mixture of other orthogonal GHZ states, $F(1-k^{n-1})$ probability of applying Pauli errors on the ideal GHZ state, and $(1-F)(1-k^{n-1})$ probability of applying Pauli errors on the mixture of other orthogonal GHZ states. 

For each GHZ basis state there are $2^{nd}$ out of $4^{nd}$ Pauli strings which will transform it into the objective GHZ state. However, if the errors are not too biased, each error operator has similar probability to be applied. The errors which stabilize the objective GHZ state will only contribute roughly $(2^{nd}/4^{nd})(1-k^{n-1}) = (1-k^{n-1})/2^{nd}\sim (1/2)^{nd}$ to the fidelity. Similarly, the errors that transform other orthogonal GHZ states to the objective GHZ state will contribute roughly $(2^{nd}/4^{nd})(1-F)(1-k^{n-1})\sim (1/2)^{nd}$. In practice $k\sim F_\mathrm{CNOT}^{d-1}$ is significantly large than $1/2$ for good local entanglement generation and moderate $d$, which means that $(1/2)^{nd}\ll Fk^{n-1}$. The above thus justifies that the fidelity of $nd$-qubit GHZ state after noisy local entanglement generation can be approximated by $Fk^{n-1}$. 

Then we consider error models which are not too biased and close to uniform distribution. Recall the explicit calculation of the parameter $C$ for QFI: 
\begin{align}
    C = 1 - \sum_{(a,b)\in\mathcal{S}}\frac{(\lambda_{a0}-\lambda_{b0})^2}{\lambda_{a0}+\lambda_{b0}}.
\end{align}
For GHZ state that can potentially demonstrate quantum advantage, we know that its fidelity $F>1/d\gg 2^{-nd}$. For error distribution close to uniform distribution, we have that $\lambda_{a0}+\lambda_{b0}\sim (1-F)2^{1-nd}$, and $|\lambda_{a0}-\lambda_{b0}| \sim f(\lambda_{a0}+\lambda_{b0})\sim f(1-F)2^{1-nd}$, where $f$ characterizes the fluctuation in the error distribution. Then the summation for calculating $C$ can be estimated as
\begin{align}
    \sum_{(a,b)\in\mathcal{S}}\frac{(\lambda_{a0}-\lambda_{b0})^2}{\lambda_{a0}+\lambda_{b0}} \sim \frac{(F-\lambda)^2}{F+\lambda} + (2^{nd-1}-1)\frac{\left[f(1-F)2^{1-nd}\right]^2}{(1-F)2^{1-nd}} \sim F + f^2(1-F),
\end{align}
where $\lambda\sim 2^{-nd}$ denotes the fidelity of $(|0\dots0\rangle-|1\dots 1\rangle)/\sqrt{2}$. For unbiased error distributions, we have $f\ll 1$, so that $f^2(1-F)\ll F$. We can visualize the the ratio between the estimated contribution to the summation for $C$ from $(|0\dots0\rangle+|1\dots 1\rangle)/\sqrt{2}$ and other errors terms, i.e. $f^2(1-F)/F$, in Fig.~\ref{fig:error_ratio}. It is clear that the contribution from error components in the GHZ-diagonal state is negligible when the fluctuation in the error distribution is small, i.e. when the error model is not too biased. 

\begin{figure}[t]
    \centering
    \includegraphics[width=0.45\linewidth]{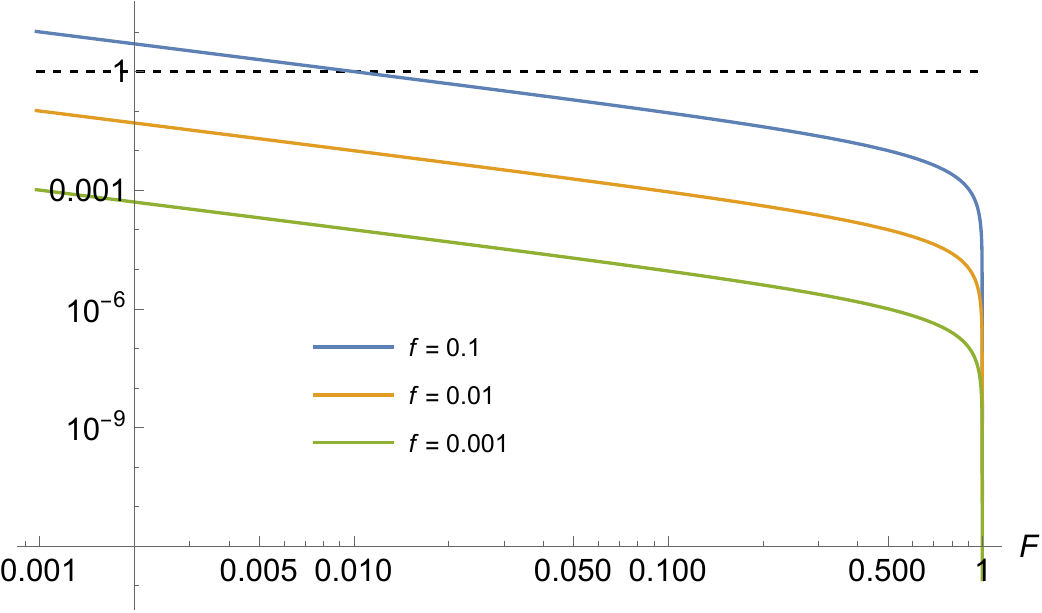}
    \caption{Visualization of the ratio between the estimated contribution from the standard GHZ component and other error terms, $f^2(1-F)/F$ as function of GHZ fidelity $F$ for error distribution fluctuation factors $f=0.1,0.01,0.001$.}
    \label{fig:error_ratio}
\end{figure}

\subsection{Additional details about the maximum number of local sensors}

\begin{figure}[t]
    \centering
    \includegraphics[width=\linewidth]{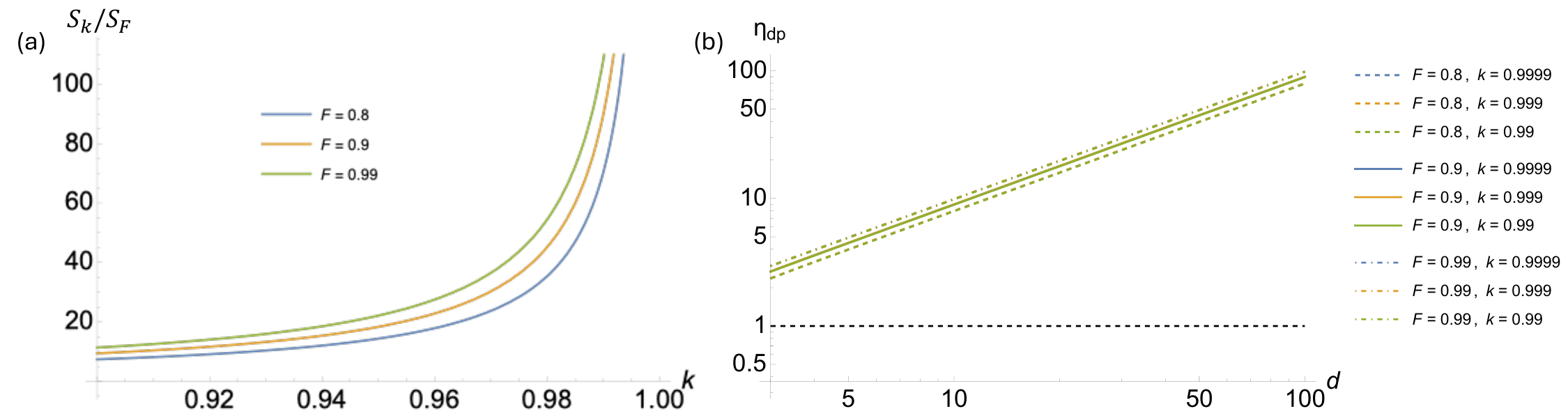}
    \caption{Visualization for $n_\mathrm{max}$. (a) Ratio between $n_\mathrm{max}$'s sensitivity to $k$ and sensitivity to $F$, i.e. $S_k/S_F$, where $d=3$ is fixed. (b) $\eta_\mathrm{dp}$ as function of the number of sensor nodes $d$, for some choices of initial $d$-qubit GHZ state fidelity $F$ and local entanglement generation quality $k$. We fix the number of local sensors $n=2$ while varying $F$ and $k$. Dashed curves correspond to $F=0.8$; solid curves correspond to $F=0.9$; dot-dashed curves correspond to $F=0.99$. Blue color represents $k=0.9999$; yellow color represents $k=0.999$; green color represents $k=0.99$.}
    \label{fig:nmax_details}
\end{figure}

We visualize the behavior of the sensitivity ratio $S_k/S_F$ when $d=3$ in Fig.~\ref{fig:nmax_details} (a). In addition, we consider the scenario where the number of local quantum sensors is fixed. The focus is thus on the impact from local entanglement generation quality $k$ and $d$-qubit GHZ state fidelity $F$, respectively. To this end, we examine the partial derivatives of relative advantage $\eta_\mathrm{dp}(F,d,n,k)=d[1-C_\mathrm{dp}(F,d,n,k)]$ w.r.t. $k$ and $F$, respectively.
\begin{align}
    & \frac{\partial}{\partial k}\eta_\mathrm{dp} = \frac{Fk^{n-2}d\left(2^{nd}k^nF-k\right)\left[\left(3\times 2^{nd} - 2\right)k + \left(2^{nd} - 2\right)2^{nd}k^nF\right](n-1)}{\left(2^{nd} - 1\right)\left[\left(2^{nd} - 2\right)k^nF-k\right]^2},\\
    & \frac{\partial}{\partial F}\eta_\mathrm{dp} =  \frac{k^{n-1}d\left(2^{nd}k^nF-k\right)\left[\left(3\times 2^{nd} - 2\right)k + \left(2^{nd} - 2\right)2^{nd}k^nF\right]}{\left(2^{nd} - 1\right)\left[\left(2^{nd} - 2\right)k^nF-k\right]^2}.
\end{align}
The partial derivatives are non-negative when $\left(2^{nd}k^nF-k\right)\geq 0$, which is generally satisfied in realistic parameter regimes. Moreover, despite the apparently complicated form of each individual expression, we have a simple expression for their ratio
\begin{align}
    \frac{\partial\eta_\mathrm{dp}/\partial k}{\partial\eta_\mathrm{dp}/\partial F} = (n-1)\frac{F}{k},
\end{align}
which demonstrate that when fixed $n$ is small and when $F$ is in general smaller than $k$, the quantum network entanglement distribution quality could indeed have higher impact on the relative advantage. We demonstrate such effect in Fig.~\ref{fig:nmax_details} (b). It is shown that curves with identical line style but different colors all overlap, but curves with different line styles are still separable from each other. This reveals that the impact from different GHZ state fidelity $F$ is greater than the local entanglement generation quality $k$, because $n=2$ is small and for the considered parameter values we have $k>F$.

\subsection{Comparison with local imperfect GHZ state}
We consider a specific local sensing strategy, where each node utilizes $n$-qubit imperfect GHZ state to estimate the local parameter. In such single-parameter estimation scenario for each node, the QFI be calculated as a special case of the multiparameter formula. Similar to the previous twirling argument, without loss of generality, we consider local initial probe state to be in the form of GHZ-diagonal state. Then through analysis that is the same as the multiparameter QFIM case, we have the analytical formula of QFI for initial GHZ-state diagonal state $\mathcal{F} = (1-C)n^2$, which corresponds to $d=1$ special case of the multiparameter formula.

We focus on the impact of imperfect initial probe state preparation, and thus assume that the optimal measurement on each local node can be performed, so the QCRB for each sensor node's local estimation can be achieved. Then according to the propagation of error, we have the estimation error for the average of all local parameters with the local strategy:
\begin{align}
    \mathrm{Var}_\mathrm{local}(\hat{\theta}_1) = \sum_{l=1}^d\left(\frac{\partial\theta_1}{\partial x_l}\right)^2\mathrm{Var}(\hat{x}_l) = \frac{1}{n^2\mu}\sum_{l=1}^d\frac{1}{d(1-C_l)},
\end{align}
for $\mu$ repetitions of measurements, $d$ sensor nodes, and $n$ qubits per node.

We consider the following model of imperfect local entanglement generation: The generated local probe state at each node is an identical depolarized GHZ state with fidelity $F(n) = \Tilde{k}^{n-1}$ as a function of the number of local sensors $n$, where $\Tilde{k}\in(0,1)$ is a constant representing the quality of local entanglement generation and the higher the better. To make the comparison with the global strategy, we consider the following correspondence: $\Tilde{k} = \sqrt[d]{k}$, where $k$ is the same constant as we used for the fidelity of the $nd$-qubit global probe state with imperfect local entanglement generation. This correspondence is motivated by the fact that when $n$ increases one, $d$ qubits are added to the global probe state, while only one qubit is added to each node's local probe state. Given this model, we can further express $\mathrm{Var}_\mathrm{local}(\hat{\theta}_1)$ as:
\begin{align}
    \mathrm{Var}_\mathrm{local}(\hat{\theta}_1) = \frac{1}{(1-C_\mathrm{local})n^2N},
\end{align}
where
\begin{align}
    &C_\mathrm{local}(d,n,k) = \frac{\left(1 - k^{(n-1)/d}\right) \left[(2^n-2)k^{1/d} + 4^nk^{n/d}\right]}{(2^n-1)\left[k^{1/d} + (2^n-2)k^{n/d}\right]}.
\end{align}

Then we can make the explicit comparison between $\eta_\mathrm{global}=d[1-C_\mathrm{dp}(F,d,n,k)]$ for the global strategy and $\eta_\mathrm{local}=(1-C_\mathrm{local})$ for the local strategy. As an example, we visualize the ratio $r=\eta_\mathrm{global}/\eta_\mathrm{local}$ when $F=0.9$, $d=3$, and $k=0.9999$, in Fig.~\ref{fig:global-vs-local-imperfect}. It can be observed that the threshold of local sensor number for advantage over imperfect local strategy is higher than the threshold for advantage over the optimal local strategy. This is intuitive as the baseline in the former scenario is worse. However, it is important to re-emphasize that the global strategy performance will become worse than the specific local strategy, even if imperfect local entanglement generation is included in the local strategy. This reinforces the fundamental limit in the advantage of global strategy when local entanglement generation is imperfect. In fact, this can be seen analytically through the asymptotic analysis of $\eta_\mathrm{local}$ and $\eta_\mathrm{global}$ (under the reasonable assumption of $1>k>1/2$):
\begin{align}
    \eta_\mathrm{global} \sim& (dF)k^{n-1},~\eta_\mathrm{local} \sim \left(k^{1/d}\right)^{n-1},
\end{align}
which means that as $n$ increases $\eta_\mathrm{global}$ will always drop below $\eta_\mathrm{local}$.
\begin{figure}[t]
    \centering
    \includegraphics[width=0.45\linewidth]{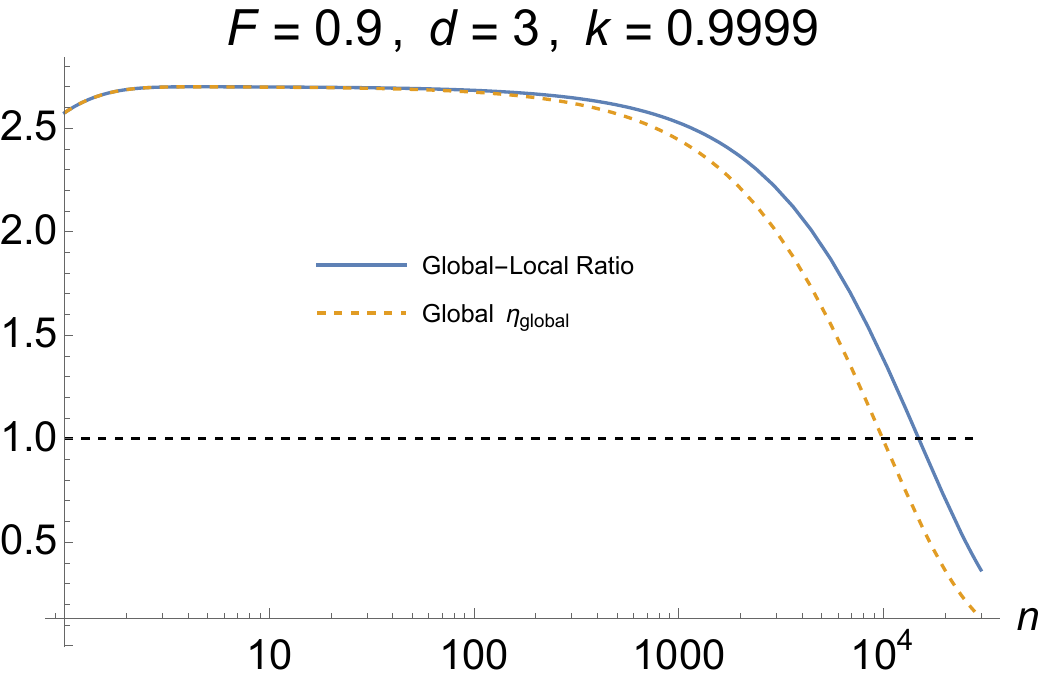}
    \caption{Visualization of the comparison between global and local sensing strategies with imperfect local entanglement generation, when $F=0.9$, $d=3$, and $k=0.9999$. The blue solid line denotes $r=\eta_\mathrm{global}/\eta_\mathrm{local}$, and the yellow dashed line denotes $\eta_\mathrm{global}$, which is identical to $\eta_\mathrm{dp}$ in the main text. The black dashed line marks the baseline value of 1: If $\eta_\mathrm{global}>1$ there is quantum advantage over the optimal local strategy, and if $r>1$ there is advantage of using global strategy over local strategy when local entanglement generation is imperfect for both.}
    \label{fig:global-vs-local-imperfect}
\end{figure}

\section{Details for unitary encoding with local measurement constraint}\label{app:loc_meas}
According to propagation of error, we can obtain the variance of estimating a certain parameter encoded in a quantum state from the measurement results of an observable $M$~\cite{toth2014quantum}:
\begin{align}\label{Eq:error_propagation}
    \mathrm{Var}_M(\hat{\theta}_1) = \frac{\langle M^2\rangle - \langle M\rangle^2}{\left\vert\frac{\partial}{\partial\theta_1}\langle M\rangle\right\vert^2},
\end{align}
where the expectation value $\langle M\rangle$ is taken under the encoded state $\rho_x$ and thus is a function of local parameters $x_i$. In quantum sensing the values of parameters to estimate are usually small so we may take the limit of $\theta\rightarrow 0$. Given a specific $d$-dimensional orthonormal basis, we will be able to transform the $x$-dependence of $\langle M\rangle$ into $\theta$-dependence. Then the partial derivative with respect to $\theta_1$ should be straightforward.

\subsection{Optimal measurement for noiseless case is useless for noisy case}
Here we consider the observable $M=\sigma_x^{\otimes nd}$. Since $M^2 = I^{\otimes nd}$ we always have $\langle M^2\rangle=1$. Consider GHZ states corresponding to the same binary strings but with opposite relative phase $|\mathrm{GHZ}_{nd}^\pm(b)\rangle = (|b\rangle\pm|2^{nd}-b-1\rangle)/\sqrt{2}$, where $|x\rangle$ denote the computational basis state corresponding to the binary representation of integer $x=0,1,\dots,2^{nd}-1$ and $b=0,1,\dots,2^{nd-1}-1$. We have
\begin{align}
    \langle\mathrm{GHZ}_{nd}^+(b)|U^\dagger(x)MU(x)|\mathrm{GHZ}_{nd}^+(b)\rangle =& \frac{1}{2}\left(\langle b| + e^{-i\phi(x)}\langle 2^{nd}-b-1|\right)\sigma_x^{\otimes nd}\left(|b\rangle+ e^{i\phi(x)}|2^{nd}-b-1\rangle\right) \nonumber\\
    =& \frac{1}{2}\left(e^{i\phi(x)} + e^{-i\phi(x)}\right) = \cos\left(\phi(x)\right) \nonumber\\
    =& -\langle\mathrm{GHZ}_{nd}^-(b)|U^\dagger(x)MU(x)|\mathrm{GHZ}_{nd}^-(b)\rangle
\end{align}    

Given the above properties of GHZ-diagonal state under the encoding channel and observable of our interest, the analysis of depolarized GHZ state becomes significantly simplified. Only $(|0\dots 0\rangle\pm|1\dots 1\rangle)/\sqrt{2}$ contribute to the expectation value $\langle M\rangle$, because other GHZ states corresponding to the same binary string have identical weights and thus their contributions cancel each other. Thus for depolarization error model we have:
\begin{align}
    \langle M\rangle_n(\theta_1) = \left(F - \frac{1-F}{2^{nd}-1}\right)\cos\left(n\sqrt{d}\theta_1\right),
\end{align}
where $n$ denotes the number of sensors per node, and noiseless local entanglement generation corresponds to $n=1$. Then we can substitute the above into Eqn.~\ref{Eq:error_propagation}:
\begin{align}
    \mathrm{Var}_M(\hat{\theta}_1) = \frac{1 - \left(F - \frac{1-F}{2^{nd}-1}\right)^2\cos^2\left(n\sqrt{d}\theta_1\right)}{dn^2\left(F - \frac{1-F}{2^{nd}-1}\right)^2 \sin^2\left(n\sqrt{d}\theta_1\right)}
\end{align}
When taking the above function to the limit of $\theta_1\rightarrow 0$ it goes to infinity if $F<1$, which suggests that this specific measurement scheme is useless to estimate small values with high accuracy. The divergence of estimation variance in small local parameter regime is general for any GHZ-diagonal state with fidelity below 1, because the denominator in error propagation will approach zero, while the numerator will stay non-zero when the fidelity is not equal to 1.

\subsection{Optimization of restricted local measurement}
\begin{figure*}[t]
    \centering
    \includegraphics[width=0.45\linewidth]{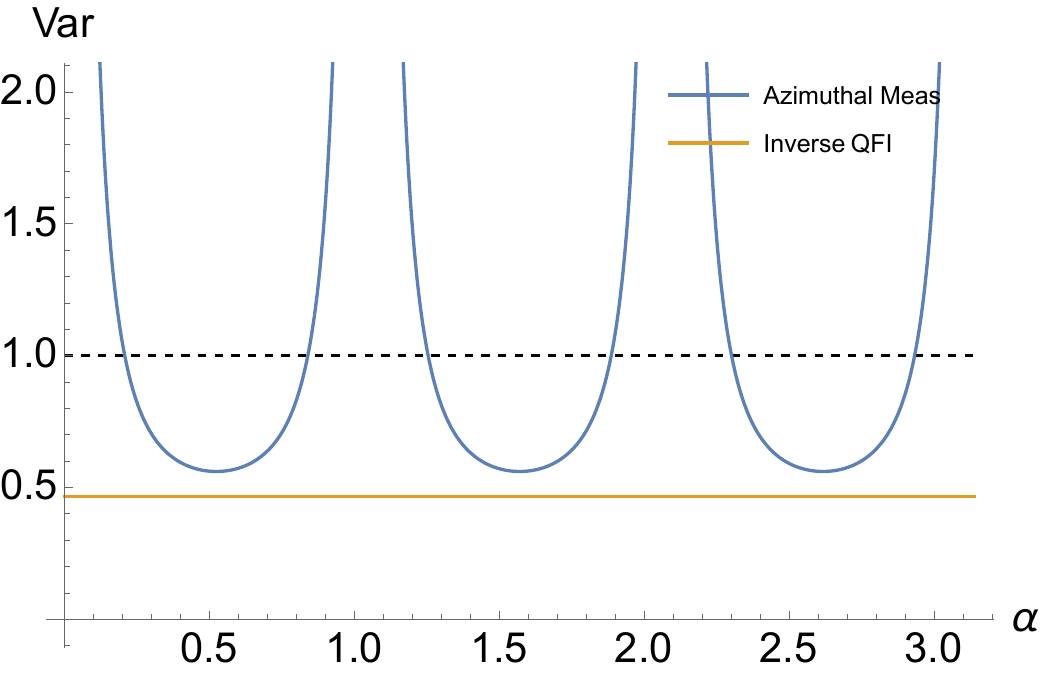}
    \caption{The estimation variance as a function of azimuthal angle $\alpha$ is shown as the blue solid curve, for $F=0.8,d=3,n=1$. The black dashed line denotes the baseline which can be achieved by the best local strategy. The yellow solid line demonstrates the inverse of the QFI, representing the ultimate estimation variance which requires some other measurement.}
    \label{fig:azimuthal_meas}
\end{figure*}
Recall the family of measurements as considered in the main text $M(\alpha)=\left[O(\alpha)\right]^{\otimes nd}$ where $O(\alpha) = |\psi^+(\alpha)\rangle\langle\psi^+(\alpha)| - |\psi^-(\alpha)\rangle\langle\psi^-(\alpha)|$ with $|\psi^\pm(\alpha)\rangle = (|0\rangle\pm e^{i\alpha}|1\rangle)$, i.e., $O(\alpha) = e^{i\alpha}|1\rangle\langle 0| + e^{-i\alpha}|0\rangle\langle 1|$. Notice that $\left(O(\alpha)\right)^2=I$ and $O(\alpha)$ is always off-diagonal in computational basis. Thus we can still simplify its expectation values under GHZ-diagonal states, especially depolarized GHZ states with $nd$ qubits:
\begin{align}
    \mathrm{Var}_{M(\alpha)}(\hat{\theta}_1) = \frac{1 - \left(F - \frac{1-F}{2^{nd}-1}\right)^2\cos^2\left[n\sqrt{d}(\theta_1+\sqrt{d}\alpha)\right]}{dn^2\left(F - \frac{1-F}{2^{nd}-1}\right)^2 \sin^2\left[n\sqrt{d}(\theta_1+\sqrt{d}\alpha)\right]}.
\end{align}
It is clear that the non-zero azimuthal angle $\alpha$ has the effect of modifying the undesired zero denominator when $\theta_1=0$. Then we can safely take $\theta_1=0$ and optimize $\alpha$:
\begin{align}\label{Eq:var_alpha}
    \left.\mathrm{Var}_{M(\alpha)}(\hat{\theta}_1)\right\vert_{\theta_1=0} = \frac{1 - \left(F - \frac{1-F}{2^{nd}-1}\right)^2\cos^2\left(nd\alpha\right)}{dn^2\left(F - \frac{1-F}{2^{nd}-1}\right)^2 \sin^2\left(nd\alpha\right)}.
\end{align}
The partial derivative with respect to $\alpha$ gives rise to:
\begin{align}
    \frac{\partial}{\partial\alpha}\left.\mathrm{Var}_{M(\alpha)}(\hat{\theta}_1)\right\vert_{\theta_1=0} = -\frac{2^{nd+1}(1-F)\left[2^{nd}(1+F) - 2\right]}{n\left(2^{nd}F - 1\right)^2}\frac{\cos(nd\alpha)}{\sin^3(nd\alpha)}.
\end{align}
It is thus obvious that the variance takes the minimum values when $\cos(nd\alpha_\mathrm{opt})=0$, i.e. $\alpha_\mathrm{opt} = \frac{2l+1}{2nd}\pi$ with $l\in\mathbb{Z}$. The minimum value is:
\begin{align}
    \left.\mathrm{Var}_{M(\alpha_\mathrm{opt})}(\hat{\theta}_1)\right\vert_{\theta_1=0} = \frac{1}{d\left(F - \frac{1-F}{2^{nd}-1}\right)^2n^2} = \frac{1}{\eta_{M(\alpha_\mathrm{opt})}n^2}.
\end{align}
Such periodicity of the optimal azimuthal angle echoes with previous numerical results~\cite{cao2023distributed}. We can visualize the parameter estimation variance with different $\alpha$ for $n=1$ in Fig.~\ref{fig:azimuthal_meas}. The ratio between $\eta_{M(\alpha_\mathrm{opt})}$ and $\eta_\mathrm{dp}(k=1)$ is:
\begin{align}
    \frac{\eta_{M(\alpha_\mathrm{opt})}}{\eta_\mathrm{dp}(k=1)} = F + \frac{1 - F}{2^{nd} - 1},
\end{align}
which quickly converges to $F$ for larger $n$ and $d$.

The fidelity threshold for $nd$-qubit depolarized GHZ state to be advantageous over the optimal local strategy when using the optimized azimuthal measurement is given by $\eta_{M(\alpha_\mathrm{opt})} = 1$:
\begin{align}
    F_{\mathrm{th},M(\alpha_\mathrm{opt})}(n) = \frac{2^{nd} + \sqrt{d} - 1}{2^{nd}\sqrt{d}}.
\end{align}
For $n=1$, we have the fidelity threshold for the initial depolarized GHZ state to demonstrate advantage when using the optimized azimuthal measurement, i.e. $\eta_{M(\alpha_\mathrm{opt})}>1$:
\begin{align}
    F_{\mathrm{th},M(\alpha_\mathrm{opt})}(1) = \frac{2^d + \sqrt{d} - 1}{2^d\sqrt{d}},
\end{align}
which can be easily shown to be monotonically decreasing for the distributed regime $d\geq 2$ of our interest. 

We can further characterize the increase of estimation variance when the azimuthal angle of the measurement deviates from the optimal value. Firstly, we can explicitly characterize the behavior of estimation variance of the local azimuthal measurement when the azimuthal angle has small deviation from the optimal angle: $\alpha = \alpha_\mathrm{opt} + \delta$
\begin{align}
    \left.\mathrm{Var}_{M(\alpha_\mathrm{opt} + \delta)}(\hat{\theta}_1)\right\vert_{\theta_1=0} = \frac{1}{d\left(F - \frac{1-F}{2^{nd}-1}\right)^2n^2} + \frac{d(1-F)2^{nd}\left[2^{nd}(F+1)-2\right]}{\left(2^{nd}F-1\right)^2}\delta^2 + O(\delta^4).
\end{align}
Then we examine the range of azimuthal angle within which we can achieve better estimation variance than the optimal local sensing strategy. Specifically, we let the estimation variance of azimuthal local measurement in Eqn.~\ref{Eq:var_alpha} equal to the estimation variance of the optimal local sensing strategy, i.e.
\begin{align}
    \frac{1 - \left(F - \frac{1-F}{2^{nd}-1}\right)^2\cos^2\left(nd\alpha\right)}{dn^2\left(F - \frac{1-F}{2^{nd}-1}\right)^2 \sin^2\left(nd\alpha\right)} = \frac{1}{n^2} \implies \frac{1 - \left(F - \frac{1-F}{2^{nd}-1}\right)^2\cos^2\left(nd\alpha\right)}{d\left(F - \frac{1-F}{2^{nd}-1}\right)^2 \sin^2\left(nd\alpha\right)} = 1,
\end{align}
which can be simplified as
\begin{align}
    \sin^2\left(nd\alpha\right) = \frac{1 - \left(F-\frac{1-F}{2^{nd}-1}\right)^2}{(d-1)\left(F-\frac{1-F}{2^{nd}-1}\right)^2}.
\end{align}
The solution to the above equation determines the azimuthal angles at which the estimation variance of local azimuthal measurement equals the estimation variance of the optimal local sensing strategy. As the estimation variance of local azimuthal measurement has a period of $T=\pi/nd$, the range of the azimuthal angle for DQS quantum advantage is
\begin{align}
    W_\alpha = \frac{\pi}{nd} - \frac{2}{nd}\arcsin\sqrt{\frac{1 - \left(F-\frac{1-F}{2^{nd}-1}\right)^2}{(d-1)\left(F-\frac{1-F}{2^{nd}-1}\right)^2}}.
\end{align}
This range of allowed azimuthal angle can be considered as an indicator of the robustness of DQS quantum advantage to local quantum control for azimuthal measurement. To characterize the properties of $W_\alpha$, we further consider the ratio between the range which allows quantum advantage $W_\alpha$ and the period $T$, $R_\alpha=W_\alpha/T$,
\begin{align}
    R_\alpha(F,n,d) = 1 - \frac{2}{\pi}\arcsin\sqrt{\frac{1 - \left(F-\frac{1-F}{2^{nd}-1}\right)^2}{(d-1)\left(F-\frac{1-F}{2^{nd}-1}\right)^2}}.
\end{align}
We have the following basic properties of $R_\alpha(F,n,d)$.
\begin{proposition}
    $R_\alpha(F,n,d)$ increases monotonically as the fidelity $F$, the number of local quantum sensors $n$, and the number of sensor nodes $d$ increase.
\end{proposition}
\begin{proof}
    The properties can be proved through explicit evaluation of the partial derivatives. We first take the partial derivative w.r.t. the fidelity $F$
    \begin{align}
        \frac{\partial}{\partial F}R_\alpha = \frac{2^{\frac{nd}{2}+1}\left(2^{nd}-1\right)^2}{\pi\left(2^{nd}F-1\right)\sqrt{(1-F)\left[2^{nd}(1+F)-2\right]\left[d\left(2^{nd}F-1\right)^2-\left(2^{nd}-1\right)^2\right]}} \geq 0.
    \end{align}
    Then we take the partial derivative w.r.t. the number of local quantum sensors $n$
    \begin{align}
        \frac{\partial}{\partial n}R_\alpha = \frac{\ln(2)(1-F)2^{\frac{nd}{2}+1}\left(2^{nd}-1\right)d}{\pi\left(2^{nd}F-1\right)\sqrt{(1-F)\left[2^{nd}(1+F)-2\right]\left[d\left(2^{nd}F-1\right)^2-\left(2^{nd}-1\right)^2\right]}} \geq 0.
    \end{align}
    Lastly we take the partial derivative w.r.t. the number of sensor nodes $d$
    \begin{align}
        \frac{\partial}{\partial n}R_\alpha = \frac{2^{\frac{nd}{2}} \left[4^{nd}F(1-F^2) - 2^{nd}(1-F)(1+3F+(d-1)n\ln(4)) - (1-F)((d-1)n\ln(4)-2)\right]}{\pi(d-1)\left(2^{nd}F-1\right)\sqrt{(1-F)\left[2^{nd}(1+F)-2\right]}\left[d\left(2^{nd}F-1\right)^2-\left(2^{nd}-1\right)^2\right]} \geq 0.
    \end{align}
    For the above inequality we consider the last two terms in the numerator
    \begin{align}
         & - 2^{nd}(1-F)(1+3F+(d-1)n\ln(4)) - (1-F)((d-1)n\ln(4)-2)\nonumber\\
         =& -(1-F)\left[\left(2^{nd}(1+3F)-2\right) - \left(2^{nd}-1\right)(d-1)n\ln(4)\right]\nonumber\\
         \geq& -(1-F)\left[4\times 2^{nd} - \left(2^{nd}-1\right)\right] = -(1-F)\left(3\times 2^{nd}+1\right),
    \end{align}
    where for the inequality we have used $F\leq 1$, $d\geq 2$ in DQS setup, and $n\geq 1$. Then we evaluate the lower bound of the entire numerator
    \begin{align}
        & 4^{nd}F(1-F^2) - 2^{nd}(1-F)(1+3F+(d-1)n\ln(4)) - (1-F)((d-1)n\ln(4)-2)\nonumber\\
        \geq & 4^{nd}F(1-F^2) - (1-F)\left(3\times 2^{nd}+1\right)\nonumber\\
        =& (1-F)\left[4^{nd}F(1+F) - \left(3\times 2^{nd}+1\right)\right]\nonumber\\
        \geq& (1-F)\left[4^{nd}\frac{1}{\sqrt{d}}\left(1+\frac{1}{\sqrt{d}}\right) - 4\times 2^{nd}\right] \geq 0,
    \end{align}
    where for the second inequality we have used the fidelity threshold for DQS advantage using the optimized azimuthal measurement: We know that if fidelity is below the threshold we have $R_\alpha=0$.
\end{proof}
Although $R_\alpha(F,n,d)$ is monotonically increasing, it quickly converges to a value independent of $n$:
\begin{align}
    R_\alpha(F,n,d) \to 1 - \frac{2}{\pi}\arcsin\sqrt{\frac{1 - F^2}{(d-1)F^2}}.
\end{align}
Then we are clear about the behavior of $W_\alpha$ as $n$ and $d$ increases for larger relative quantum advantage in DQS 
\begin{align}
    W_\alpha \to \frac{\pi}{nd}\left[1 - \frac{2}{\pi}\arcsin\sqrt{\frac{1 - F^2}{(d-1)F^2}}\right].
\end{align}
Such scaling clearly demonstrates that while when $n$ and $d$ increase we can in principle achieve higher quantum advantage, the requirement on control accuracy also becomes stricter.

We re-emphasize that our optimization of the azimuthal measurement is not a global optimization, as it is locally separable when there are multiple sensor qubits per node. In principle we could utilize entangling measurement locally. However, the fact that when $n=1$ the optimized azimuthal measurement could not saturate the QCRB implies that the QCRB for this distributed quantum sensing problem is not achievable with local measurement. It is known that the QCRB can be achieved by projective measurement, and when $n=1$ the local projective measurement should be a tensor product of single-qubit projective measurements. Moreover, the assumed depolarization noise on the GHZ state guarantees symmetry among all qubits, so it suffices to consider identical projective measurement per qubit. Note that under our problem setup, z-direction projection is unable to extract the information of the parameter to estimate. Therefore, our optimization of azimuthal projective measurement should have covered the optimal local measurement, while as seen in the results they could not achieve the QCRB. On the other hand, suppose we insist on applying the global entangling measurement, it will need distribution of additional entangled state by the quantum network as resource to implement the measurement. Consequently, the performance of parameter estimation will be further limited by the fidelity of resource state for performing entangling measurement, and thus the fidelity of distributed entangled state by the network must be high. However, as we have seen that when the network is able to distributed high fidelity entangled state, performing local optimized azimuthal measurement can already achieve fairly low estimation variance, which is only $\sim 1/F$ times the ultimate achievable variance by the globally optimal measurement.

\subsection{Bell state fidelity requirement estimation}
Moreover, we can estimate the fidelity requirement of bipartite entanglement (Bell pair) distribution network to achieve quantum advantage in the local parameter average estimation task, by taking the $(d-1)$-th root of the fidelity threshold. This estimation comes from the assumption that we need $(d-1)$ bipartite entangled states between the sensor nodes to assemble the desired $d$-qubit GHZ state, and the approximation that the final GHZ state has fidelity equal to the product of all Bell states' fidelities. Specifically, for $n=1$ we have:
\begin{align}
    &F_{\mathrm{th},M(\alpha_\mathrm{opt})}^{\mathrm{Bell}} = \left(\frac{2^d - 1 + \sqrt{d}}{2^d\sqrt{d}}\right)^{\frac{1}{d-1}},\\
    &F_\mathrm{th,opt}^{\mathrm{Bell}} = \left(2^{-d} + \frac{(2^d-1)\left(2^d-2+\sqrt{(2^d-2)^2+2^{d+3}d}\right)}{2^{2d+1}d}\right)^{\frac{1}{d-1}},
\end{align}
where we use the superscript ``Bell'' to emphasize that the above fidelity thresholds are for Bell states distributed by the quantum network. The two thresholds are visualized in Fig.~\ref{fig:Bell_thresholds}(a). It can be seen that the fidelity threshold for the global optimal measurement slightly decreases when the number of sensor nodes $d$ is small. More specifically, we have:
\begin{align}
    &F_\mathrm{th,opt}^{\mathrm{Bell}}(2) \approx 0.730,~ F_\mathrm{th,opt}^{\mathrm{Bell}}(3) \approx 0.714,~ F_\mathrm{th,opt}^{\mathrm{Bell}}(4) \approx 0.711,\nonumber\\
    &F_\mathrm{th,opt}^{\mathrm{Bell}}(5) \approx 0.716,~ F_\mathrm{th,opt}^{\mathrm{Bell}}(6) \approx 0.726,~ F_\mathrm{th,opt}^{\mathrm{Bell}}(7) \approx 0.738.
\end{align}
Nevertheless, in general both the thresholds increase monotonically as the number of sensor nodes increases. Moreover, we can straightforwardly evaluate the asymptotic scaling of both thresholds:
\begin{align}
    F_{\mathrm{th},M(\alpha_\mathrm{opt})}^{\mathrm{Bell}} \sim d^{\frac{1}{2(1-d)}},~ F_\mathrm{th,opt}^{\mathrm{Bell}} \sim d^{\frac{1}{1-d}}.
\end{align}
Very interestingly, the Bell state fidelity threshold for the global optimal measurement is the square of the threshold for the optimized azimuthal measurement in the asymptotic limit of $d\rightarrow\infty$, i.e. $F_\mathrm{th,opt}^{\mathrm{Bell}}\sim\left(F_{\mathrm{th},M(\alpha_\mathrm{opt})}^{\mathrm{Bell}}\right)^2$. It is then easily seen that both fidelity thresholds converge to one in the large $d$ limit. To visualize the asymptotic scaling, we further plot the thresholds of infidelity $\epsilon=1-F$ in log-log coordinate in Fig.~\ref{fig:Bell_thresholds}(b). It is clear that $\epsilon_\mathrm{th,opt}^{\mathrm{Bell}}/2$ is almost equal to $\epsilon_{\mathrm{th},M(\alpha_\mathrm{opt})}^{\mathrm{Bell}}$ when $d$ becomes large, which verifies the quadratic relation of the asymptotic scalings.

\begin{figure*}[t]
    \centering
    \includegraphics[width=0.8\linewidth]{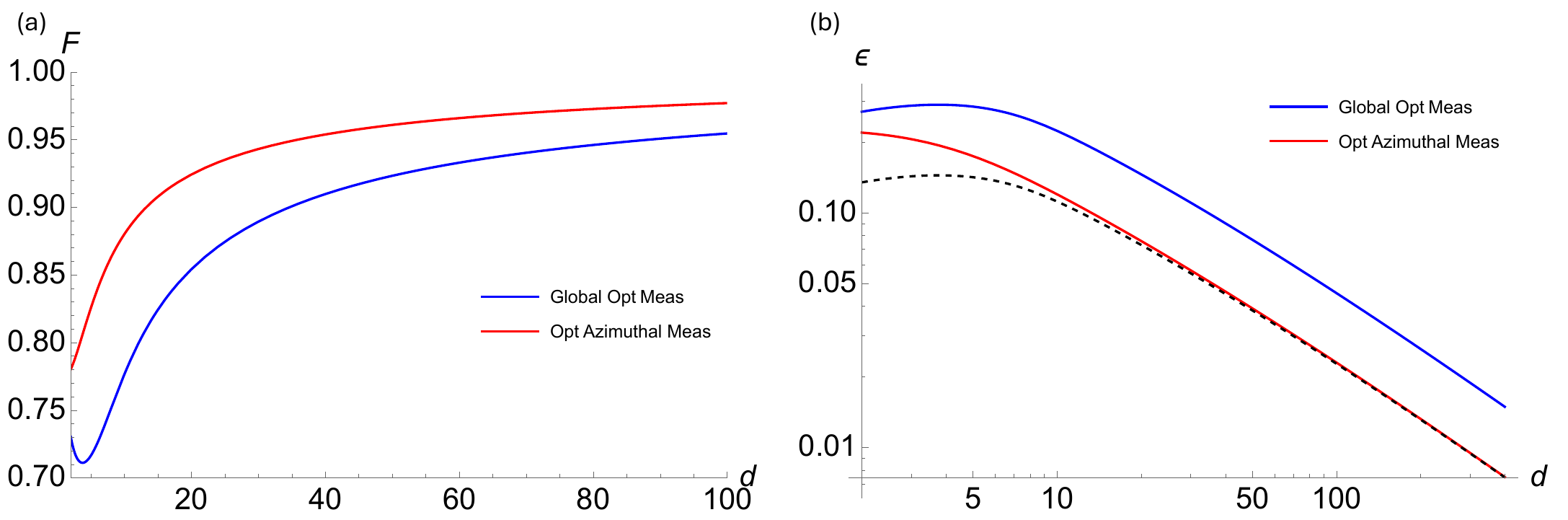}
    \caption{Estimation of Bell state fidelity thresholds for demonstrating quantum advantage in distributed quantum sensing, where the initial $d$-qubit GHZ state is assembled from $(d-1)$ Bell states distributed by the quantum network. (a) The Bell state fidelity threshold when the global optimal measurement is allowed is depicted by the blue curve, while the threshold when only the optimized azimuthal measurement can be performed is illustrated by the red curve. (b) The infidelity $\epsilon=1-F$ thresholds were plotted in the log-log coordinate for additional visualization insight. The blue curve again corresponds to the global optimal measurement and the red curve to the optimized azimuthal measurement. For reference, we plot the half of the infidelity threshold for the global optimal measurement in black dashed line, to better demonstrate the relationship between the asymptotic scalings of both thresholds.}
    \label{fig:Bell_thresholds}
\end{figure*}

\section{Details for encoding under dephasing}\label{app:dephasing}
We first add the detailed derivation of the effective initial state. Then we discuss the estimation of the frequency instead of the accumulated phase. Last but not least we also consider the specific local azimuthal measurement and its optimization.

\subsection{Derivation of the effective initial state}
In the $nd$-qubit extra Pauli channel for the effective initial state the probability for each Pauli string to be applied to the initial state is determined by the number of Pauli Z's in it $\#Z$. More explicitly we have $p_{\#Z} =q^{nd-\#Z}(1-q)^{\#Z}$. For our ideal objective state, the standard $nd$-qubit GHZ state $|\mathrm{GHZ}_{nd}\rangle$, it is known that its stabilizer is generated by the following $nd$ Pauli strings 
\begin{align}
    S =& \{\sigma_x^{(1)}\otimes \sigma_x^{(2)}\otimes \dots\otimes \sigma_x^{(nd)}, \sigma_z^{(1)}\otimes \sigma_z^{(2)}, \sigma_z^{(2)}\otimes \sigma_z^{(3)}, \dots, \sigma_z^{(nd-1)}\otimes \sigma_z^{(nd)}\},
\end{align}
from which one can observe that any Pauli string that contains an even number of Pauli Z's will stabilize $|\mathrm{GHZ}_{nd}\rangle$. On the other hand, the remaining Pauli strings in the $nd$-qubit Pauli channel from dephasing still only have identity operators and the Pauli Z's, so they will only result in a phase flip in $|\mathrm{GHZ}_{nd}\rangle$.

Now recall that we consider $nd$-qubit depolarized GHZ state distributed over the quantum sensor network 
\begin{align}
    \rho_0 = V(nd)|\mathrm{GHZ}_{nd}\rangle\langle\mathrm{GHZ}_{nd}| + [1-V(nd)]I_{nd}/2^{nd},
\end{align}
where $V(nd)$ again denotes visibility, and according to our state preparation error model we have GHZ fidelity $F(nd) = V(nd) + [1-V(nd)]/2^{nd} = Fk^{n-1}$. The above explicit decomposition into a pure GHZ projector and a maximally mixed state is convenient in that the maximally mixed state is invariant under the Pauli channel according to the unitality. For the GHZ projector, the probability for it to remain unchanged is thus the total probability of its stabilizing Pauli strings in the noise channel~\cite{zang2025enhancing}
\begin{align}
    \tilde{q} = \sum_{\mathrm{even}~\#Z}\binom{nd}{\#Z}p_{\#Z} = \sum_{i=0}^{\lfloor nd/2\rfloor}\binom{nd}{2i}q^{nd-2i}(1-q)^{2i} = \frac{1 + (2q-1)^{nd}}{2},
\end{align}
for positive integer $n$. The resulting state is then
\begin{align}\label{eqn:eff_probe_dephasing}
    \rho_0' = \tilde{q}V\mathrm{Proj}_{\mathrm{GHZ}^+} + (1-\tilde{q})V\mathrm{Proj}_{\mathrm{GHZ}^-} + (1-V)I_{nd}/2^{nd},
\end{align}
where $\mathrm{Proj}_{\mathrm{GHZ}^+} = |\mathrm{GHZ}_{nd}\rangle\langle\mathrm{GHZ}_{nd}|$ and $\mathrm{Proj}_{\mathrm{GHZ}^-} = |\mathrm{GHZ}_{nd}^-\rangle\langle\mathrm{GHZ}_{nd}^-|$ with $|\mathrm{GHZ}_{nd}^-\rangle=(|00\dots 0\rangle-|11\dots 1\rangle)/\sqrt{2}$ is the phase-flipped $|\mathrm{GHZ}_{nd}\rangle$.

Then according to the derivation of the QFIM, we only need to care about the eigenvalues of $\rho_0'$ which correspond to $|\mathrm{GHZ}_{nd}\rangle$ and $|\mathrm{GHZ}_{nd}^-\rangle$, because the effective dephasing channel does not change other eigenvalues which are all equal according to the depolarizing error assumption. Therefore, we have the QFI for $\theta_1$ as
\begin{align}
    \mathcal{F}(\theta_1) =& d \frac{\left[\left(\tilde{q}V + \frac{1 - V}{2^{nd}}\right) - \left((1-\tilde{q})V + \frac{1 - V}{2^{nd}}\right)\right]^2}{\left(\tilde{q}V + \frac{1 - V}{2^{nd}}\right) + \left((1-\tilde{q})V + \frac{1 - V}{2^{nd}}\right)}n^2 = d\frac{\left[F(2^dk)^n - k\right]^2 (2q-1)^{2nd}}{k(2^{nd} - 1)\left[Fk^n(2^{nd}-2) + k\right]} n^2,
\end{align}
which reduces to the QFI without dephasing if $q=1$. Note that in the above the parameter $q$ is time-dependent, and in the following we will consider this fact explicitly.

\subsection{Estimation of the average frequency}
From Eqn.~\ref{eqn:qfi_dephasing_phase}, we have the explicit expression of $\mathcal{F}^{(\tilde{\theta}_1)}(t)$
\begin{align}
    \mathcal{F}^{(\tilde{\theta}_1)}(t) = \frac{d\left[F(2^dk)^n - k\right]^2 e^{-2nd\gamma t}}{k(2^{nd} - 1)\left[Fk^n(2^{nd}-2) + k\right]} n^2 t^2 .
\end{align}
In the following we may omit the superscript when there is no abuse of notation. The requirement for potential quantum advantage at fixed encoding duration $\mathcal{F}_\mathrm{net}(t) > \mathcal{F}_\mathrm{local}(t)$ gives
\begin{align}\label{eqn:qfi_dephase_vs_noiseless_local}
    &\frac{d\left[F(2^dk)^n - k\right]^2}{k(2^{nd} - 1)\left[Fk^n(2^{nd}-2) + k\right]} \approx Fk^{n-1}d > e^{2nd\gamma t}\nonumber\\
    \Rightarrow& \gamma t < \frac{1}{2nd}\ln\left(\frac{d\left[F(2^dk)^n - k\right]^2}{k(2^{nd} - 1)\left[Fk^n(2^{nd}-2) + k\right]}\right) \approx \frac{1}{2nd}\ln\left(Fk^{n-1}d\right),
\end{align}
where the approximation comes from assumptions $k\lesssim 1$, $2^d\gg 1$ and $2^{nd}\gg 2$. The partial derivatives with respect to $n,d,k,F$ are
\begin{align}
    \frac{\partial}{\partial n}\left[\frac{1}{2nd}\ln\left(Fk^{n-1}d\right)\right] =& \frac{n\ln k - \ln\left(Fk^{n-1}d\right)}{2dn^2},\\
    \frac{\partial}{\partial d}\left[\frac{1}{2nd}\ln\left(Fk^{n-1}d\right)\right] =& \frac{1 - \ln\left(Fk^{n-1}d\right)}{2d^2n},\\
    \frac{\partial}{\partial k}\left[\frac{1}{2nd}\ln\left(Fk^{n-1}d\right)\right] =& \frac{n - 1}{2dkn},\\
    \frac{\partial}{\partial F}\left[\frac{1}{2nd}\ln\left(Fk^{n-1}d\right)\right] =& \frac{1}{2dFn}.
\end{align}

For the sequential scheme of quantum sensing, we want to maximize
\begin{align}
    \frac{\mathcal{F}_\mathrm{net}(t)}{t} = \frac{d\left[F(2^dk)^n - k\right]^2 e^{-2nd\gamma t}}{k(2^{nd} - 1)\left[Fk^n(2^{nd}-2) + k\right]} n^2 t.
\end{align}
The optimal time for the networked sensing under noise $t^*=1/(2nd\gamma)$ leads to
\begin{align}
    \frac{\mathcal{F}_\mathrm{net}(t^*)}{t^*} = \frac{\left[F(2^dk)^n - k\right]^2n}{2 e \gamma k(2^{nd} - 1)\left[Fk^n(2^{nd}-2) + k\right]},
\end{align}
where $e$ is the base of the natural logarithm. For the noiseless local baseline we simply have
\begin{align}
    \frac{\mathcal{F}_\mathrm{local}(t_\mathrm{th})}{t_\mathrm{th}} = n^2t_\mathrm{th}.
\end{align}
When $t_\mathrm{th}\geq t^*$, we have the necessary condition for potential quantum advantage
\begin{align}
    \frac{\mathcal{F}_\mathrm{net}(t^*)}{t^*} > \frac{\mathcal{F}_\mathrm{local}(t_\mathrm{th})}{t_\mathrm{th}}  \Rightarrow \frac{\left[F(2^dk)^n - k\right]^2}{2 n e \gamma k(2^{nd} - 1)\left[Fk^n(2^{nd}-2) + k\right]t_\mathrm{th}} \approx \frac{F k^{n-1}}{2 n e \gamma t_\mathrm{th}} > 1,
\end{align}
where we have made the same approximation as above. This condition is effectively given a constraint on the time threshold $t_\mathrm{th}$, as 
\begin{align}
    \frac{1}{2 n d \gamma} = t^* < t_\mathrm{th} \lesssim \frac{F k^{n-1}}{2 n e \gamma},
\end{align}
which requires
\begin{align}
    \frac{1}{2 n d \gamma} < \frac{F k^{n-1}}{2 n e \gamma}.
\end{align}
When the above inequality is satisfied, the noisy quantum sensor network may possibly demonstrate advantage over the noiseless baseline for long sensing cycle time threshold $t_\mathrm{th}\geq t^*$. 

When $t_\mathrm{th}<t^*$, the necessary condition of quantum advantage is
\begin{align}
    \frac{d\left[F(2^dk)^n - k\right]^2}{k(2^{nd} - 1)\left[Fk^n(2^{nd}-2) + k\right]} > e^{2nd\gamma t_\mathrm{th}},
\end{align}
which is in the same form as Eqn.~\ref{eqn:qfi_dephase_vs_noiseless_local}. Therefore, if the threshold of sensing cycle time satisfies
\begin{align}
    \gamma t_\mathrm{th} < \frac{1}{2nd}\ln\left(\frac{d\left[F(2^dk)^n - k\right]^2}{k(2^{nd} - 1)\left[Fk^n(2^{nd}-2) + k\right]}\right) \approx \frac{1}{2nd}\ln\left(Fk^{n-1}d\right),
\end{align}
the noisy networked sensing can potentially demonstrate quantum advantage over the noiseless local sensing strategy.

\subsection{Local azimuthal measurement}
From error propagation, we have the estimation variance of the average phase $\tilde{\theta}_1$ using $M(\alpha)$
\begin{align}
    \mathrm{Var}_{M(\alpha)}(\hat{\tilde{\theta}}_1) =& \frac{1 - \left[\left(\tilde{q}V + \frac{1 - V}{2^{nd}}\right) - \left((1-\tilde{q})V + \frac{1 - V}{2^{nd}}\right)\right]^2\cos^2\left[n\sqrt{d}(\theta_1+\sqrt{d}\alpha)\right]}{dn^2\left[\left(\tilde{q}V + \frac{1 - V}{2^{nd}}\right) - \left((1-\tilde{q})V + \frac{1 - V}{2^{nd}}\right)\right]^2 \sin^2\left[n\sqrt{d}(\theta_1+\sqrt{d}\alpha)\right] t^2}\nonumber\\
    =& \frac{1 - \left[(2\tilde{q} - 1)V\right]^2\cos^2\left[n\sqrt{d}(\theta_1+\sqrt{d}\alpha)\right]}{dn^2\left[(2\tilde{q} - 1)V\right]^2 \sin^2\left[n\sqrt{d}(\theta_1+\sqrt{d}\alpha)\right] t^2}\nonumber\\
    =& \frac{1 - \left(F(nd) - \frac{1-F(nd)}{2^{nd}-1}\right)^2\cos^2\left[n\sqrt{d}(\theta_1+\sqrt{d}\alpha)\right]e^{-2nd\gamma t}}{dn^2 \left(F(nd) - \frac{1-F(nd)}{2^{nd}-1}\right)^2 \sin^2\left[n\sqrt{d}(\theta_1+\sqrt{d}\alpha)\right] t^2 e^{-2nd\gamma t}},
\end{align}
where $\tilde{\theta}_1 = \theta_1 / t$ and $F(nd) = Fk^{n-1}$. Then we take $\tilde{\theta}_1 = 0$ and optimize $\alpha$
\begin{align}
    \left.\mathrm{Var}_{M(\alpha)}(\hat{\tilde{\theta}}_1)\right\vert_{\tilde{\theta}_1 = 0} = \frac{1 - \left(F(nd) - \frac{1-F(nd)}{2^{nd}-1}\right)^2\cos^2(n d \alpha)e^{-2nd\gamma t}}{dn^2 \left(F(nd) - \frac{1-F(nd)}{2^{nd}-1}\right)^2 \sin^2(n d \alpha) t^2 e^{-2nd\gamma t}}.
\end{align}
The partial derivative with respect to $\alpha$ leads to
\begin{align}
    \frac{\partial}{\partial\alpha}\left.\mathrm{Var}_{M(\alpha)}(\hat{\tilde{\theta}}_1)\right\vert_{\tilde{\theta}_1=0} = \frac{2\left[\left(2^{nd}F(nd) - 1\right)^2 - \left(2^{nd} - 1\right)^2e^{2nd\gamma t}\right]}{n\left(2^{nd}F(nd) - 1\right)^2t^2}\frac{\cos(nd\alpha)}{\sin^3(nd\alpha)}.
\end{align}
The variance takes the minimum values when $\cos(nd\alpha_\mathrm{opt})=0$, i.e. $\alpha_\mathrm{opt} = \frac{2l+1}{2nd}\pi$ with $l\in\mathbb{Z}$. The minimum value is:
\begin{align}
    \left.\mathrm{Var}_{M(\alpha_\mathrm{opt})}(\hat{\tilde{\theta}}_1)\right\vert_{\tilde{\theta}_1=0} = \frac{k^2\left(2^{nd} - 1\right)^2}{d \left[F(2^dk)^n - k\right]^2 e^{-2nd\gamma t}} \frac{1}{n^2t^2}.
\end{align}
Then we can compare with the QFI that we derive in the above
\begin{align}
    \frac{\left.\mathrm{Var}_{M(\alpha_\mathrm{opt})}(\hat{\tilde{\theta}}_1)\right\vert_{\tilde{\theta}_1=0}}{1 / \mathcal{F}(t)} = \frac{d\left[F(2^dk)^n - k\right]^2 e^{-2nd\gamma t} n^2 t^2}{k(2^{nd} - 1)\left[Fk^n(2^{nd}-2) + k\right]} \frac{k^2\left(2^{nd} - 1\right)^2}{d \left[F(2^dk)^n - k\right]^2 e^{-2nd\gamma t} n^2 t^2} = \frac{2^{nd} - 1}{Fk^{n-1}(2^{nd}-2) + 1} \geq 1,
\end{align}
where the equality is taken if and only if $Fk^{n-1} = 1$, i.e. when initial probe state preparation is noiseless. The partial derivatives with respect to $n,d,F,k$ are
\begin{align}
    \frac{\partial}{\partial n}\left[\frac{2^{nd} - 1}{Fk^{n-1}(2^{nd}-2) + 1}\right] =& \frac{k\left[2^{nd}d(k - Fk^n) \ln 2 - \left(4^{nd} - 3\times 2^{nd} + 2\right)Fk^n\ln k\right]}{\left[Fk^n\left(2^{nd} - 2\right) + k\right]^2} ,\\
    \frac{\partial}{\partial d}\left[\frac{2^{nd} - 1}{Fk^{n-1}(2^{nd}-2) + 1}\right] =& \frac{2^{nd}nk(k - Fk^n) \ln 2}{\left[Fk^n\left(2^{nd} - 2\right) + k\right]^2} \approx \frac{n}{(k^2 2^d)^n} \frac{k(k - Fk^n) \ln 2}{F^2} ,\\
    \frac{\partial}{\partial k}\left[\frac{2^{nd} - 1}{Fk^{n-1}(2^{nd}-2) + 1}\right] =& - \frac{\left(4^{nd} - 3\times 2^{nd} + 2\right)k^n}{\left[Fk^n\left(2^{nd} - 2\right) + k\right]^2} (n-1)F,\\
    \frac{\partial}{\partial F}\left[\frac{2^{nd} - 1}{Fk^{n-1}(2^{nd}-2) + 1}\right] =& - \frac{\left(4^{nd} - 3\times 2^{nd} + 2\right)k^n}{\left[Fk^n\left(2^{nd} - 2\right) + k\right]^2} k .
\end{align}
From the above one can see that the saturation is almost not affected by the number of sensor nodes (number of local parameters) in the network, as the partial derivative with respect to $d$ vanishes for increasing $nd$. On the other hand, the relative difference between the QCRB and the estimation variance using the optimal local azimuthal measurement generally increases as the number of local sensors $n$ increases, due to accumulation of depolarizing error. Also, when $F\sim k$ the local entanglement generation quality $k$ has increasing impact on the saturation than the quantum network entanglement distribution fidelity $F$ as $n$ increases.

\section{Parameter estimation under entanglement distribution failure}\label{app:est_w_fail}
The probe state is generated through assembling bipartite entangled states, and in real quantum networks it is possible that some bipartite entangled states are not successfully generated within certain attempts. We consider a specific way of bipartite entangled state assembly to generate the GHZ state: A center node will share bipartite entanglement with other nodes. Therefore, the nodes which fail to establish entangled link with the center node will remain separable from other nodes, while all the nodes that successfully share entangled link with the center node will be entangled together. Let $\mathcal{N}$ s.t. $\vert\mathcal{N}\vert=d$ be the index set of all sensor nodes, and $\mathcal{C}$ be the index set for the nodes which remain isolated, thus the set difference $\mathcal{N}\setminus\mathcal{C}$ denotes the index set of nodes that will be entangled. The objective parameter to estimate $\theta_1=\sum_{i\in\mathcal{N}}x_i/\sqrt{d}$ can be rewritten as
\begin{align}
    \theta_1 = \frac{1}{\sqrt{d}}\sum_{i\in\mathcal{C}}x_i + \sqrt{\frac{\vert\mathcal{N}\setminus\mathcal{C}\vert}{d}}\left(\frac{1}{\sqrt{\vert\mathcal{N}\setminus\mathcal{C}\vert}}\sum_{j\in\mathcal{N}\setminus\mathcal{C}}x_j\right) = \frac{1}{\sqrt{d}}\sum_{i\in\mathcal{C}}x_i + \sqrt{\frac{\vert\mathcal{N}\setminus\mathcal{C}\vert}{d}}\theta'_1,
\end{align}
where the second term is proportional to the average of local parameters on all the nodes that are entangled, $\theta'_1$. When not all sensor nodes are entangled, we consider that the sensor network will use the following hybrid strategy: The isolated nodes use local probe state to estimate the local parameters $x_i, i\in\mathcal{C}$ individually, while the entangled nodes use a globally entangled probe state to estimate $\theta'_1$. Then according to propagation of errors, we have the variance of estimating $\theta$ by such a hybrid strategy as
\begin{align}\label{Eq:var_hybrid}
    \mathrm{Var}_\mathrm{hybrid}(\hat{\theta}_{1,\mathcal{C}}) = \frac{1}{d}\sum_{i\in\mathcal{C}}\mathrm{Var}(\hat{x}_i) + \frac{\vert\mathcal{N}\setminus\mathcal{C}\vert}{d}\mathrm{Var}(\hat{\theta}'_1),
\end{align}
where the subscript $\mathcal{C}$ for estimator $\hat{\theta}_{1,\mathcal{C}}$ emphasizes that the estimator is uniquely determined by \textit{configuration} $\mathcal{C}$.

Due to the probabilistic nature of remote entanglement distribution, for each attempt of probe state generation the configuration $\mathcal{C}$ can be different. Nevertheless, we know exactly what configuration happens each time due to the heralded nature of quantum network protocols. Let $\mathfrak{C}$ denotes the collection of all possible configurations $\mathcal{C}$. For the scenario with $d$ sensor nodes, we have that $\vert\mathfrak{C}\vert = \sum_{n=0}^{d-2}\binom{d}{n}+1 = 2^d-d$ without over-counting. Suppose we repeat the quantum sensing cycle for $N$ times, and each configuration $\mathcal{C}$ occurs with probability $p_\mathcal{C}$. We may expect that there are $N_\mathcal{C}=p_\mathcal{C}N$ data points corresponding to configuration $\mathcal{C}$, then we have $\mathrm{Var}(\hat{\theta}_{1,\mathcal{C}})\propto 1/N_\mathcal{C}$. Assuming unbiased estimators, the normalized linear combination of $\hat{\theta}_{1,\mathcal{C}}$ still has the mean of $\theta_1$. Then our objective is to minimize the variance of the normalized linear combination $\hat{\theta}_1 = \sum_{\mathcal{C}\in\mathfrak{C}}w_\mathcal{C}\hat{\theta}_{1,\mathcal{C}}$, s.t. $\sum_{\mathcal{C}}w_\mathcal{C}=1$. We may further assume that $\hat{\theta}_{1,\mathcal{C}}$ are uncorrelated, which according to the Bienaym\'{e}'s identity gives $\mathrm{Var}(\hat{\theta}_1) =\sum_{\mathcal{C}\in\mathfrak{C}}w_\mathcal{C}^2\mathrm{Var}(\hat{\theta}_{1,\mathcal{C}})$.
Then using Lagrange multiplier the optimal weighting for the minimum variance is
\begin{align}
    w_\mathcal{C}^* = \frac{1}{\mathrm{Var}(\hat{\theta}_{1,\mathcal{C}})} \left[\sum_{\mathcal{C}\in\mathfrak{C}}\frac{1}{\mathrm{Var}(\hat{\theta}_{1,\mathcal{C}})}\right]^{-1},
\end{align}
which is the so-called inverse-variance weighting that gives the minimum variance
\begin{align}
    \mathrm{Var}_\mathrm{min}(\hat{\theta}_1) =\left[\sum_{\mathcal{C}\in\mathfrak{C}}\frac{1}{\mathrm{Var}(\hat{\theta}_{1,\mathcal{C}})}\right]^{-1}.
\end{align}

\section{Quantum network simulation}\label{app:qn_sim}

\begin{table}[t]
    \caption{\label{tab:sim_res}Simulation results for the three scenarios.}
    \begin{ruledtabular}
    \begin{tabular}{ccccc}
    & $p$ & $\eta$ & $\Tilde{\eta}$ & $F$ \\
    \colrule
    \textrm{Scenario 1} & $\approx0.02$ & $\approx0.95<1$ & $\approx0.02<1$ & 0.591\\
    \textrm{Scenario 2} & $\approx0.72$ & $\approx1.57>1$ & $\approx1.13>1$ & 0.732\\
    \textrm{Scenario 3} & $\approx0.81$ & $\approx2.19>1$ & $\approx1.77>1$ & 0.854\\
    \end{tabular}
    \end{ruledtabular}
\end{table}

In addition to the analytics, we simulate GHZ state distribution given realistic first-generation quantum repeater~\cite{briegel1998quantum,munro2015inside,muralidharan2016optimal,azuma2023quantum} protocol stacks with an open-source, customizable quantum network simulator, SeQUeNCe~\cite{wu2021sequence}. We consider a simple 3-node network with linear topology, where a center node directly connects the other two end nodes through optical fibers, with a Bell state measurement station in the middle of each fiber link. The network topology is visualized in Fig.~\ref{fig:schematics} (a). First, bipartite entanglement links are established between the center node and the other nodes. Then LOCC, such as gate teleportation~\cite{gottesman1999demonstrating,eisert2000optimal,jiang2007distributed} and graph state fusion~\cite{browne2005resource,pirker2018modular}, are performed to generate a GHZ state across all network nodes.

In our simulations we consider three scenarios, and thus three sets of system parameter values: $\{0.01\mathrm{s}, 0.1\mathrm{s}, 1\mathrm{s}\}$ for memory coherence times, $\{0.05, 0.1, 0.5\}$ for memory efficiency, and $\{0.8, 0.85, 0.9\}$ for raw entanglement fidelity. The first simulation scenario uses the first value in each of the above sets, and so on. Meanwhile, we fix other system parameters, especially: a 1s time interval allowed for entanglement distribution, a 10km distance between the center node and end nodes, 10 memories per end node and 20 memories on the center node so that entanglement purification~\cite{bennett1996purification,deutsch1996quantum} is possible. We emphasize that the parameter values are not chosen from a specific reported experiment, but are selected according to the general vision of the state of the art and the potential future development of various candidate platforms for quantum networking, including solid state systems~\cite{knaut2024entanglement,stolk2024metropolitan,ruskuc2024scalable}, atomic systems~\cite{liu2024creation,krutyanskiy2023entanglement}, and superconducting systems~\cite{sahu2023entangling}.

We characterize the performance of probe state distribution by three figures of merit, namely the success probability of distributing the 3-qubit GHZ state, $p$, the relative advantage, $\eta$, and the normalized relative advantage, $\Tilde{\eta} = p\cdot\eta$, which takes into account failure in entanglement distribution. For each scenario we repeat the simulation 1000 times, and $\eta$ is calculated based on the average of density matrices of successfully distributed GHZ states under ensemble interpretation. The results are collected in Table~\ref{tab:sim_res}, from which it is clear that the most modest parameter choice does not permit quantum advantage, while when hardware performance improves quantum advantage becomes possible without changing the realistic quantum network protocol stack. We also note the average fidelity $F$ of the successfully generated states for each scenario; as expected, the fidelity increases as the network parameters improve.

In the following we elaborate on the simulation details.

\begin{figure}[t]
    \centering
    \includegraphics[width=\linewidth]{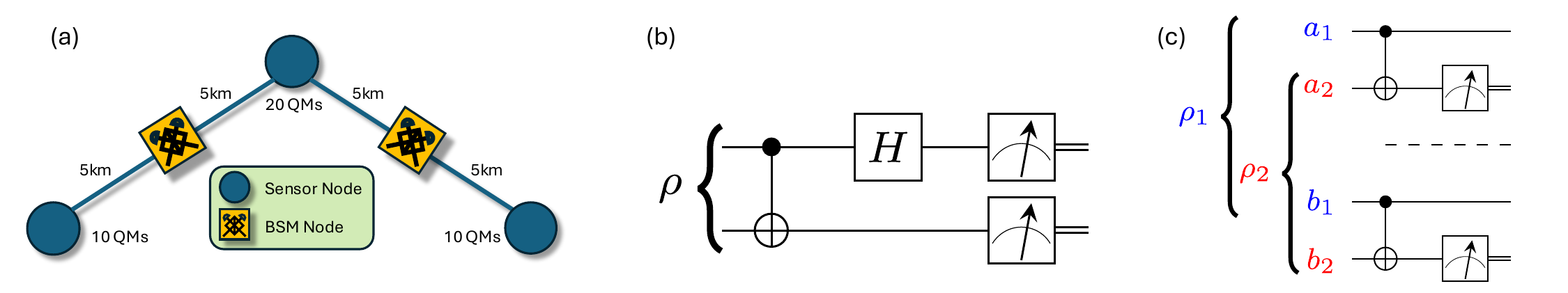}
    \caption{Visualization of network simulation schematics and protocols: (a) topology of the simulated 3-node quantum network, (b) standard Bell state measurement (BSM), and (c) entanglement purification protocol based on bilocal CNOT, specifically the BBPSSW/DEJMPS protocol.}
    \label{fig:schematics}
\end{figure}

\subsection{Bipartite entanglement (Bell pairs) distribution}
The distribution of bipartite entanglement involves remote entanglement generation between nearest network nodes that are directly connected by a physical implementation of quantum channels (e.g., optical fibers or free-space optics), entanglement swapping which extends the generated entangled states to more distant node pairs, and potentially entanglement purification which is expected to improve the quality (fidelity) of distributed entangled pairs.

For entanglement generation, in SeQUeNCe we implement a new abstract model of single-heralded entanglement generation protocol based on meet-in-the-middle photonic Bell state measurement. The underlying processes assumed for a single attempt of entanglement generation includes:
\begin{enumerate}
    \item Local memory-photon entanglement on a network node is generated, and the photon is transmitted to a middle interference center via a lossy optical fiber to perform heralded measurement (Bell state measurement/BSM).
    \item Two photons being transmitted from both nodes are expected to simultaneously arrive at the middle node at certain time determined by fiber lengths.
    \item However, due to the existence of losses in optical fibers, it is possible that one or both photons are not successfully transmitted. We assume that the heralded measurement can only be successful if both photons arrive.
    \item Moreover, we assume that the underlying implementation of heralded measurement consists of linear optics only, which fundamentally upper bounds the success probability of BSM to 1/2~\cite{lutkenhaus1999bell}.
\end{enumerate}
According to the above description, the overall success probability of an entanglement generation attempt is $p_{t,l}p_{t,r}p_m$ where $p_{t,l(r)}$ is the probability for left (right) photon to arrive at the middle station and $p_m$ is the success probability of heralded measurement when both photons arrive, which in simulation is a positive, tunable parameter below 1/2. Additionally, the successfully generated entangled state, is assumed to be a Bell diagonal state (BDS). 

Entanglement swapping involves three nodes, one middle node which will perform projective BSM, and two end nodes which will receive the BSM results from the middle node and perform local corrections accordingly. We consider the standard quantum circuit of BSM as shown in Fig.~\ref{fig:schematics}(b), where measurements are both in computational basis. We also consider that different physical implementation of entanglement swapping could lead to different success probability, and allow the success probability of each entanglement swapping to be tunable. Bipartite entanglement purification involves two nodes which share multiple entangled state. In SeQUeNCe, the default protocol is the BBPSSW/DEJMPS protocol based on bilocal CNOT, i.e. both nodes will perform CNOT on two qubits from two entangled states they hold, and perform single qubit computational basis measurement on one qubit per node, whose circuit is shown in Fig.~\ref{fig:schematics}(c). The measurement outcomes from both nodes are classically communicated to each other to determine if the purification attempt is successful.

To increase the practicality in our simulation, we consider imperfections in single-qubit measurements and two-qubit gates (especially CNOT). However, we do not consider errors in single-qubit gates, because experimentally multi-qubit gates are much more noisy than single-qubit ones, with infidelity about one order of magnitude higher. As we assume entangled states are in BDS form, we perform offline analytical derivation of imperfect entanglement swapping results in Sec.~\ref{sec:analytic_swap}, and imperfect entanglement purification results in Sec.~\ref{sec:analytic_purify}. These allow us to avoid explicit tracking of density matrix under quantum operations during simulation of entanglement distribution. We assume a homogeneous model of gate and measurement imperfections on each node: On one specific node, the error model of multi-qubit gate is unchanged when applying to different qubits, and so is the error model of single-qubit measurement. 

When multiple entangled states (quantum memories) are available, the space of strategies for executing entanglement protocols (e.g., generation, swapping and purification) is large. Currently in SeQUeNCe, the strategy is determined by the priorities of each protocol in the stack. For concreteness, our implementation gives the highest priority to entanglement purification (i.e., purification will be performed ASAP), whenever multiple entangled states are available on a single link (between two network nodes). Entanglement swapping has the second highest priority, and it will be performed immediately when there is one entangled state established between a center node and each of its left and right neighbors (not necessarily nearest neighbors).
We also emphasize that when taking into account quantum memory decoherence, and the non-universality of entanglement purification~\cite{zang2025no}, the optimization of quantum protocol strategies/policies~\cite{khatri2021policies} will be very hard due to the large policy space, so this is beyond the scope of this work where simulation is for the demonstration of principle for DQS probe state preparation.

\subsection{Time-dependent quantum memory decoherence}
Quantum memories inevitably undergo decoherence throughout the time after they are initialized. Therefore, besides quantum operation imperfections, we also implement time-dependent memory decoherence as analytically modeled by continuous time Pauli channels~\cite{zang2025entanglement} which is naturally compatible with the BDS assumption. 

Quantum states distributed over the network only change when quantum protocols are operated, such as entanglement swapping and entanglement purification. Therefore, memory idling decoherence effect only needs to be applied to the quantum state prior to the quantum protocols, according to the duration of idling which can be calculated as the difference between the current time and the last time point when the quantum state is updated. Then the quantum memory decoherence over time smoothly fits in the discrete event simulation framework. 

Due to the decoherence over time, one cannot use a certain quantum memory forever, and resetting is needed after a certain amount of time, which is the cutoff time for the storage of quantum states. In our simulation, we set the cutoff time to be proportional to quantum memory coherence time, while the coefficient is a tunable parameter. Specifically, for entangled states involving more than one quantum memories, the time for all involved memories to be reset is determined by the smallest reset time among the memories.

For entanglement generation, the quantum memories which store the successfully generated entangled states are initialized before the nodes receive the heralding signals. To account for the idling errors of quantum memories before the arrival of measurement results, we assume that the successfully generated entangled state starts from a certain BDS, specified by the entanglement generation protocol (parameters including the initial fidelity and relative strength of three Pauli error components). Then the entangled state will decohere under independent quantum memory decoherence channels which are dependent on idling time, and the decoherence time for each memory is determined by the time when the memory is initialized and the time when the heralding signal is received.

\subsection{Probe state preparation as network application}
Our objective probe state is a global GHZ state of all sensors distributed in different spatial locations. The generation of this GHZ state requires entangling operation which is realized with Bell pairs which are distributed by the quantum network. Conceptually, we may view the process of generating global GHZ state as an application of the quantum network, which consumes distributed bipartite entangled states from the service of quantum network. More specifically, the application can be divided into two parts:
\begin{enumerate}
    \item The involved $N$ sensor nodes in the network collectively establish this application, and decide one out of the $N$ nodes as the center node which corresponds to the center qubit of standard GHZ state preparation circuit. Then the rest $(N-1)$ nodes request bipartite entanglement, i.e. Bell pair, with the center node. 
    \item When bipartite entanglement has been established between the sensor nodes, it can be used to perform CNOT teleportation, or GHZ merging. 
\end{enumerate}

At the end of network simulation, the assembly process of all the distributed Bell pairs into a GHZ state is implemented with the help of functions from the open-source package QuTiP~\cite{johansson2012qutip}. Specifically, we provide two implementations of the process, one using GHZ merging and the other using CNOT teleportation, where we explicitly simulate noisy CNOT gates and noisy single-qubit measurement, together with the classical feedforward correction based on single-qubit measurement outcome(s). Specifically, in this work we implement the assemble-in-the-end protocol, where bipartite entanglement between sensor nodes are requested within a fixed period of time, before they are assembled into a multipartite state. The optimization of the assembly process, e.g. optimal time of assembly in analogy to optimal time of purification~\cite{zang2025entanglement}, can be interesting future work. It is then possible that more than one entangled link between two sensor nodes exist at the end the stage, and in such cases we perform a final round of purification before the assembly, so that between each pair of nodes there is at most one entangled link. The final purification is implemented in a fidelity-aware manner (assuming the capability of estimating entangled state fidelities based on information of system hardware and timing), due to the consideration of the non-universality of entanglement purification~\cite{zang2025no}: We always attempt to purify the two lowest-fidelity entanglement pairs in the entangled state ensemble, and we repeat this process until there are at most one entangled pair left. After this process, we utilize QuTiP functions to simulate the assembly of bipartite entangled states into a GHZ state.

The processes during bipartite entanglement distribution can be easily tracked with SeQUeNCe features, including noisy entanglement generation, swapping and purification, and time dynamics for noisy Bell state in BDS form under quantum memory idling decoherence. Moreover, the network simulation with SeQUeNCe is parallelizable~\cite{wu2024parallel}, so in principle we can simulate bipartite entanglement distribution at large scale, which is physically allowed because in such scenarios entangled states are independent bipartite states. However, when we start to create multipartite entanglement, the necessity of storing larger quantum states and even quantum dynamics simulation will eventually limit the problem scale, especially for DQS where multipartite entangled resource state is desired.

\subsection{Network simulation parameters}
\noindent\textbf{Protocol configuration---}

The probe state generation protocol mainly has one parameter, the cutoff time $T_\mathrm{pc}$: The GHZ state generation protocol consists of generation of individual bipartite entanglement, and the final assembly of the GHZ state. All pairs of sensor nodes will attempt Bell state generation for $T_\mathrm{pc}$, before the final assembly.

\noindent\textbf{Network configuration---}
\begin{enumerate}
    \item Network topology $\mathcal{G}$. A graph that determines how the nodes, including both sensor nodes and potential intermediate repeater nodes in the network core, are connected with physical channels, e.g. optical fibers. 
    \item Channel length $L$. The length of a specific optical fiber, which will determine the loss of photons.
    \item Classical communication time $T_\mathrm{cc}$. We consider that classical communication may require some higher level communication protocols, thus the time is not necessarily equal to length divided by the speed of light.
    \item Number of quantum memories per node available for bipartite entanglement generation attempts $M$.
\end{enumerate}

\noindent\textbf{Hardware configuration---}
\begin{enumerate}
    \item Quantum memory coherence time $\tau_m$.
    \item Quantum memory error pattern. The ratio between three Pauli errors in the continuous time idling channel.
    \item Quantum memory frequency $f_m$. The frequency at which the entanglement generation can be attempted.
    \item Quantum memory efficiency $\eta_m$. The success probability of establishing memory-photon entanglement in each entanglement generation attempt.
    \item Memory cutoff ratio $r_m$. After a certain quantum memory has been initialized for $r_m\tau_m$ time, it will be reset.
    \item Raw Bell state $\rho_0$. The initial state after entanglement generation, before memory decoherence.
    \item Two-qubit gate fidelity $\eta_\mathrm{g}$ and single-qubit measurement fidelity $\eta_\mathrm{m}$.
\end{enumerate}

\subsection{Analytical modeling}\label{sec:analytic}
Here we provide the analytical formulae which facilitate our simulation. They could also be of independent interest to other analytical studies. In the following, for both entanglement swapping and purification, we include imperfection in two-qubit gates and single-qubit measurements~\cite{briegel1998quantum}: $\Tilde{U}_{ij}\rho\Tilde{U}^\dagger_{ij} = pU_{ij}\rho U^\dagger_{ij} + \frac{1-p}{4}I_{ij}\otimes\mathrm{tr}_{ij}\rho$ and $\Tilde{P}_{i=0,1} = \eta|i\rangle\langle i| + (1-\eta)|1-i\rangle\langle 1-i|$, where $\mathrm{tr}_{ij}(\cdot)$ represents partial tracing over qubits $i,j$, $U_{ij}$ is an ideal two-qubit unitary, and $\Tilde{U}_{ij}$ is an imperfect implementation of $U_{ij}$, which has $p$ probability of perfect implementation and $(1-p)$ probability of resulting in depolarizing error. $\Tilde{P}_{i=0,1}$ is the POVM corresponding to imperfect implementation of single-qubit projective measurement $P_i=|i\rangle\langle i|$, which has $\eta$ probability of giving a correct measurement outcome. Some of the results derived below have been used under specific settings in~\cite{zang2023entanglement,chung2025cross} written by part of this work's authors, and the special case of identical measurement and gate error rates for two parties, i.e. $\eta_1=\eta_2=\eta$ and $p_1=p_2=p$, can be found in~\cite{hartmann2007role} without derivation. In the future we may consider other operation error models, which could make analytical derivation of results more complicated and thus explicit simulation of quantum operations~\cite{zang2022simulation} during network simulation might then be necessary.

\subsubsection{Entanglement swapping}\label{sec:analytic_swap}
Consider two Bell diagonal states, one between node A and node B and the other between node B and node C, where node B is the middle swapping station:
\begin{align}
    \rho_\mathrm{A,B_1}(\vec{\lambda})\otimes\rho_\mathrm{B_2,C}(\vec{\lambda'}) = \left(\lambda_1\Phi^+ + \lambda_2\Phi^- + \lambda_3\Psi^+ + \lambda_4\Psi^-\right)_\mathrm{A,B_1}\otimes\left(\lambda_1'\Phi^+ + \lambda_2'\Phi^- + \lambda_3'\Psi^+ + \lambda_4'\Psi^-\right)_\mathrm{B_2,C},
\end{align}
where $|\Phi^\pm\rangle=(|00\rangle\pm|11\rangle)/\sqrt{2}$, $|\Psi^\pm\rangle=(|01\rangle\pm|10\rangle)/\sqrt{2}$, and $\Phi^\pm,\Psi^\pm$ are the projectors onto the corresponding pure Bell states. Entanglement swapping requires performing BSM on the two qubits held by node B, i.e. $\mathrm{B_1,B_2}$. Conditioned on measurement outcome, single-qubit operation is further performed on qubit A (or C) to ensure that the resulting state is in a specific form of Bell state, and here we focus on $\Phi^+$ without loss of generality. Specifically, if the outcomes (in the order of $\mathrm{B_1,B_2}$) are 00 do nothing; if the outcomes are 01, apply $X$ gate; if the outcomes are 10, apply $Z$ gate; if the outcomes are 11, apply $Y$ gate.

By evaluation of CNOT on $\mathrm{B_1,B_2}$ + Hadamard on $\mathrm{B_1}$ for tensor products of two pure Bell states, and the four imperfect POVM's corresponding to four possible measurement outcomes, we have the final output state
\begin{align}
    \rho_{A,C}(\vec{\lambda},\vec{\lambda'}) =& 
    p[\eta_1\eta_2C_I + (1-\eta_1)\eta_2C_X + \eta_1(1-\eta_2)C_Z + (1-\eta_1)(1-\eta_2)C_Y]\Phi^+_\mathrm{A,C}\nonumber\\
    &+ 
    p[(1-\eta_1)\eta_2C_I + \eta_1\eta_2C_X + (1-\eta_1)(1-\eta_2)C_Z + \eta_1(1-\eta_2)C_Y]\Phi^-_\mathrm{A,C}\nonumber\\
    &+
    p[\eta_1(1-\eta_2)C_I + (1-\eta_1)(1-\eta_2)C_X + \eta_1\eta_2C_Z + (1-\eta_1)\eta_2C_Y]\Psi^+_\mathrm{A,C}\nonumber\\
    &+
    p[(1-\eta_1)(1-\eta_2)C_I + \eta_1(1-\eta_2)C_X + (1-\eta_1)\eta_2C_Z + \eta_1\eta_2C_Y]\Psi^-_\mathrm{A,C}\nonumber\\
    &+ 
    \frac{1-p}{4}(\Phi^+_\mathrm{A,C} + \Phi^-_\mathrm{A,C} + \Psi^+_\mathrm{A,C} + \Psi^-_\mathrm{A,C}),
\end{align}
where we have defined $C_I=\lambda_1\lambda_1' + \lambda_2\lambda_2' + \lambda_3\lambda_3' + \lambda_4\lambda_4',\ C_X=\lambda_1\lambda_2' + \lambda_2\lambda_1' + \lambda_3\lambda_4' + \lambda_4\lambda_3',\ C_Y=\lambda_1\lambda_4' + \lambda_4\lambda_1' + \lambda_2\lambda_3' + \lambda_3\lambda_2',\ C_Z=\lambda_1\lambda_3' + \lambda_3\lambda_1' + \lambda_2\lambda_4' + \lambda_4\lambda_2'$, and we assume CNOT error probability $(1-p)$ and single-qubit measurement error probabilities $(1-\eta_1)$ and $(1-\eta_2)$.

\subsubsection{Entanglement purification}\label{sec:analytic_purify}
Under these error models, consider that CNOT gates on both sides have different error probabilities $p_A$ and $p_B$, also measurements have different error probabilities $\eta_A$ and $\eta_B$. Then for two Bell diagonal states as input to the DEJMPS purification protocol, the (un-normalized) output state conditioned on success is:
\begin{align}
    \Tilde{\rho}_{A_1,B_1}(\vec{\lambda},\vec{\lambda'}) =& p_Ap_B\left[(1-\eta_A-\eta_B+2\eta_A\eta_B)(\lambda_1\lambda_1' + \lambda_2\lambda_2') + (\eta_A+\eta_B-2\eta_A\eta_B)(\lambda_1\lambda_3' + \lambda_2\lambda_4')\right]\Phi^+_\mathrm{A_1,B_1}\nonumber\\
    &+ p_Ap_B\left[(1-\eta_A-\eta_B+2\eta_A\eta_B)(\lambda_1\lambda_2' + \lambda_2\lambda_1') + (\eta_A+\eta_B-2\eta_A\eta_B)(\lambda_1\lambda_4' + \lambda_2\lambda_3')\right]\Phi^-_\mathrm{A_1,B_1}\nonumber\\
    &+ p_Ap_B\left[(1-\eta_A-\eta_B+2\eta_A\eta_B)(\lambda_3\lambda_3' + \lambda_4\lambda_4') + (\eta_A+\eta_B-2\eta_A\eta_B)(\lambda_3\lambda_1' + \lambda_4\lambda_2')\right]\Psi^+_\mathrm{A_1,B_1}\nonumber\\
    &+ p_Ap_B\left[(1-\eta_A-\eta_B+2\eta_A\eta_B)(\lambda_3\lambda_4' + \lambda_4\lambda_3') + (\eta_A+\eta_B-2\eta_A\eta_B)(\lambda_3\lambda_2' + \lambda_4\lambda_1')\right]\Psi^-_\mathrm{A_1,B_1}\nonumber\\
    &+ \frac{1 - p_Ap_B}{8}(\Phi^+_\mathrm{A_1,B_1} + \Phi^-_\mathrm{A_1,B_1} + \Psi^+_\mathrm{A_1,B_1} + \Psi^-_\mathrm{A_1,B_1}),
\end{align}
where we use $A_1,B_1$ to denote the two qubits of kept entangled pair and $A_2,B_2$ have been measured (traced out), while the BDS density matrix elements with prime correspond to the measured entangled pair. Then the success probability is just the trace of the above un-normalized state
\begin{align}
    p_s =& p_Ap_B[\eta_A\eta_B+(1-\eta_A)(1-\eta_B)](\lambda_1\lambda_1' + \lambda_2\lambda_2' + \lambda_1\lambda_2' + \lambda_2\lambda_1' + \lambda_3\lambda_3' + \lambda_4\lambda_4' + \lambda_3\lambda_4' + \lambda_4\lambda_3')\nonumber\\
    &+ p_Ap_B[\eta_A(1-\eta_B)+(1-\eta_A)\eta_B](\lambda_1\lambda_3' + \lambda_2\lambda_4' + \lambda_1\lambda_4' + \lambda_2\lambda_3' + \lambda_3\lambda_1' + \lambda_4\lambda_2' + \lambda_3\lambda_2' + \lambda_4\lambda_1') + \frac{1 - p_Ap_B}{2},\label{Eq:succ_prob_imperfect}
\end{align}
from which we can explicitly obtain normalized output state conditioned upon success as $\rho_{A_1,B_1}(\vec{\lambda},\vec{\lambda'}) = \Tilde{\rho}_{A_1,B_1}(\vec{\lambda},\vec{\lambda'})/p_s$. Specifically, the first diagonal element in Bell basis corresponding to $\Phi^+_\mathrm{A_1,B_1}$ is the fidelity
\begin{align}
    F_s = \frac{p_Ap_B\left[\frac{\eta_A\eta_B+(1-\eta_A)(1-\eta_B)}{2}(\lambda_1\lambda_1' + \lambda_2\lambda_2') + \frac{\eta_A(1-\eta_B)+(1-\eta_A)\eta_B}{2}(\lambda_1\lambda_3' + \lambda_2\lambda_4')\right] + \frac{1 - p_Ap_B}{16}}{p_s/2}.
\end{align}

Additionally, we can prove that no matter what input (Bell diagonal) states, CNOT infidelities, and measurement infidelities are, the probability of getting measurement outcome indicating success is always not lower than 1/2:
\begin{proposition}
    Every trial of CNOT based recurrence entanglement purification protocol which has Bell diagonal states as input, using imperfect CNOT and single qubit measurement whose error models are described above, will get measurement outcome indicating success with probability not lower than 1/2.
\end{proposition}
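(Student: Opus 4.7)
The plan is to massage the sixteen-term expression for $p_s$ in Eq.~\eqref{eqn:succ_prob_imperfect} into a clean four-factor product for $p_s - 1/2$. The key observations are that (i) only a coarse-grained ``parity'' of the Bell-diagonal eigenvalues enters, and (ii) when either CNOT depolarizes, the two measurement outcomes on at least one side are uniformly random and independent of the other side, so the match probability is automatically $1/2$ — this is precisely the origin of the $(1-p_Ap_B)/2$ term and lets us focus on the $p_A=p_B=1$ piece.

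First, I would introduce shorthand $a = \lambda_1+\lambda_2$, $b = \lambda_3+\lambda_4 = 1-a$ and similarly $a',b'=1-a'$ for the two input BDSs (the quantities $a,a'$ are the total weights on the ``$\Phi$-parity'' Bell states). A direct regrouping of the two eight-term sums in Eq.~\eqref{eqn:succ_prob_imperfect} gives
\begin{align*}
\lambda_1\lambda_1' + \lambda_2\lambda_2' + \lambda_1\lambda_2' + \lambda_2\lambda_1' + \lambda_3\lambda_3' + \lambda_4\lambda_4' + \lambda_3\lambda_4' + \lambda_4\lambda_3' &= aa' + bb',\\
\lambda_1\lambda_3' + \lambda_2\lambda_4' + \lambda_1\lambda_4' + \lambda_2\lambda_3' + \lambda_3\lambda_1' + \lambda_4\lambda_2' + \lambda_3\lambda_2' + \lambda_4\lambda_1' &= ab' + a'b,
\end{align*}
whose sum is $(a+b)(a'+b')=1$. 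I would also set $\alpha = \eta_A\eta_B + (1-\eta_A)(1-\eta_B)$ and $\beta = 1-\alpha$, and note the identity $2\alpha - 1 = (2\eta_A - 1)(2\eta_B - 1)$.

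Next, substituting into Eq.~\eqref{eqn:succ_prob_imperfect} and subtracting $1/2$,
\begin{equation*}
p_s - \tfrac{1}{2} = p_Ap_B\bigl[\alpha(aa'+bb') + \beta(ab'+a'b) - \tfrac{1}{2}\bigr].
\end{equation*}
Using $\beta = 1-\alpha$ together with $(aa'+bb')+(ab'+a'b)=1$, the bracket collapses to $(2\alpha-1)(aa'+bb'-\tfrac{1}{2})$; and a one-line computation gives $aa'+bb'-\tfrac{1}{2}=2(a-\tfrac{1}{2})(a'-\tfrac{1}{2})$. This yields the main identity
\begin{equation*}
p_s - \tfrac{1}{2} \;=\; 2\,p_Ap_B\,(2\eta_A-1)(2\eta_B-1)(a-\tfrac{1}{2})(a'-\tfrac{1}{2}).
\end{equation*}

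Finally, I would invoke the standard physical conventions that make every factor non-negative: measurement fidelities satisfy $\eta_A,\eta_B\geq 1/2$ (otherwise one just relabels outcomes), giving $(2\eta_A-1)(2\eta_B-1)\geq 0$; and the eigenvalue labeling is fixed so that $\lambda_1$ corresponds to the target Bell state $|\Phi^+\rangle$ and is the largest, so whenever the input fidelity exceeds $1/2$ one has $a=\lambda_1+\lambda_2\geq 1/2$ (and similarly $a'\geq 1/2$). The four factors are then individually non-negative and $p_s\geq 1/2$ follows. The main obstacle is spotting the two-step factorization that reduces the long sum to a product of single-BDS and single-side quantities; once that is in hand the result is immediate under the conventional sign assumptions on $\eta$ and on the eigenvalue labeling.
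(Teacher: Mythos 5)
Your proof is correct and follows the same basic strategy as the paper's: both start from Eq.~\eqref{eqn:succ_prob_imperfect}, group the sixteen cross terms by the ``$\Phi$ vs.\ $\Psi$'' parity weights $a=\lambda_1+\lambda_2$ and $a'=\lambda_1'+\lambda_2'$, and use the normalization of the $\eta$-dependent coefficients. The difference is in the finishing step: the paper stops at the inequality $aa'+(1-a)(1-a')\geq 1/2$ for $a,a'\geq 1/2$ and verifies that the remaining $\eta$-terms telescope to zero, whereas you push the algebra through to the exact identity $p_s-\tfrac12 = 2\,p_Ap_B(2\eta_A-1)(2\eta_B-1)(a-\tfrac12)(a'-\tfrac12)$. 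Your version is a genuine (if modest) improvement: it gives the exact deviation from $1/2$ and its sign structure, identifies precisely when equality holds, and—importantly—makes explicit the assumption $\eta_A,\eta_B\geq 1/2$ that the paper's proof also needs (its substitution of $1/2$ for $aa'+(1-a)(1-a')$ only yields a lower bound when the coefficient $(2\eta_A-1)(2\eta_B-1)$ is non-negative) but never states. Both proofs equally rely on $\lambda_1,\lambda_1'\geq 1/2$ so that $a,a'\geq 1/2$, which the paper justifies by requiring the input states to be entangled and you justify by the labeling convention; either is acceptable.
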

\begin{proof}
    We use Equation~\ref{Eq:succ_prob_imperfect} as the starting point. Notice that $\lambda_1+\lambda_2+\lambda_3+\lambda_4=\lambda_1'+\lambda_2'+\lambda_3'+\lambda_4'=1$ according to normalization and $1\geq\lambda_1,\lambda_1'\geq 1/2$ to ensure that the BDS's are entangled. After some reorganization we have
    \begin{align}
        p_s =& \frac{1}{2} + p_Ap_B[\eta_A(1-\eta_B)+(1-\eta_A)\eta_B]\nonumber\\
        &+ p_Ap_B[ab+(1-a)(1-b)][\eta_A\eta_B+(1-\eta_A)(1-\eta_B)-\eta_A(1-\eta_B)-(1-\eta_A)\eta_B] - \frac{1}{2}p_Ap_B,
    \end{align}
    where we have defined $a\coloneqq\lambda_1+\lambda_2$, $b\coloneqq\lambda_1'+\lambda_2'$, and thus naturally $1\geq a,b\geq 1/2$. Then we have
    \begin{align}
        p_s \geq& \frac{1}{2} + p_Ap_B\left([\eta_A(1-\eta_B)+(1-\eta_A)\eta_B] + \frac{1}{2}[\eta_A\eta_B+(1-\eta_A)(1-\eta_B)-\eta_A(1-\eta_B)-(1-\eta_A)\eta_B] - \frac{1}{2} \right)\nonumber\\
        =& \frac{1}{2} + p_Ap_B\left(\frac{1}{2}[\eta_A\eta_B+(1-\eta_A)(1-\eta_B)+\eta_A(1-\eta_B)+(1-\eta_A)\eta_B] - \frac{1}{2} \right)\nonumber\\
        =& \frac{1}{2} + p_Ap_B\left(\frac{1}{2} - \frac{1}{2} \right) = \frac{1}{2},
    \end{align}
    where for the first inequality we have used the fact that $ab+(1-a)(1-b)\geq 1/2$ for $1\geq a,b\geq 1/2$.
\end{proof}

\twocolumngrid

\bibliography{references}

\end{document}